\DeclareMathOperator*{\argmin}{argmin}
\begin{document}
\algblock{ParFor}{EndParFor}
\algnewcommand\algorithmicparfor{\textbf{parfor}}
\algnewcommand\algorithmicpardo{\textbf{do}}
\algnewcommand\algorithmicendparfor{\textbf{end\ parfor}}
\algrenewtext{ParFor}[1]{\algorithmicparfor\ #1\ \algorithmicpardo}
\algrenewtext{EndParFor}{\algorithmicendparfor}
\newcommand{\lgg}{\log_2\left(\frac{1}{\epsilon}\right)}
\newcommand{\bs}[1]{\boldsymbol{#1}}
\newtheorem{obs}{Observation}
\newtheorem{theorem}{Theorem}
\newtheorem{lemma}{Lemma}
\newtheorem{corollary}{Corollary}
\newtheorem{assumption}{Assumption}
\newtheorem{remark}{Remark}
\newcommand{\eacc}{$\varepsilon$-accuracy }
%
\title{Optimising cost vs accuracy of decentralised analytics in fog computing environments}

\author{\IEEEauthorblockN{Lorenzo Valerio, Andrea Passarella, Marco Conti}\\
\IEEEauthorblockA{Institute for Informatics and Telematics\\
National Research Council\\
Pisa, Italy\\
Email: \{l.valerio,a.passarella,m.conti\}@iit.cnr.it}
\thanks{\copyright 2021 IEEE.  Personal use of this material is permitted.  Permission from IEEE must be obtained for all other uses, in any current or future media, including reprinting/republishing this material for advertising or promotional purposes, creating new collective works, for resale or redistribution to servers or lists, or reuse of any copyrighted component of this work in other works. Accepted for publication. DOI: 10.1109/TNSE.2021.3101986}
}


%


\maketitle

\begin{abstract}
%
The exponential growth of devices and data at the edges of the Internet is rising scalability and privacy concerns on approaches based exclusively on remote cloud platforms. Data gravity, a fundamental concept in Fog Computing, points towards decentralisation of computation for data analysis, as a viable alternative to address those concerns.
Decentralising AI tasks on several cooperative devices means identifying the optimal set of locations or Collection Points (CP for short) to use, in the continuum between full centralisation (i.e., all data on a single device) and full decentralisation (i.e., data on source locations).
We propose an analytical framework able to find the optimal operating point in this continuum,  linking the accuracy of the learning task with the corresponding \emph{network} and \emph{computational} cost for moving data and running the distributed training at the CPs.
We show through simulations that the model accurately predicts the optimal trade-off, quite often an \emph{intermediate} point between full centralisation and full decentralisation, showing also a significant cost saving w.r.t. both of them. Finally, the analytical model admits closed-form or numeric solutions, making it not only a performance evaluation instrument but also a design tool to configure a given distributed learning task optimally before its deployment.

\end{abstract}


%
\section{Introduction} \label{sec:intro}
We are facing exponential growth in the number of personal mobile and IoT devices at the edge of the Internet. According to Cisco~\cite{Cisco2020} the number of connected devices will exceed by three times the global population by 2023. Most importantly, this is a trend that, apparently, not only will not slow down in the near future, but also will be the cause of the so-called ``data tsunami'', i.e., the explosion of the amount of data generated by these devices at the edge of the Internet. Ericsson~\cite{Ericsson:2020aa} foresees that global mobile data traffic will grow by almost a factor of 5 to reach 164EB per month in 2025.

Most of the value of data consists in the possibility of being processed and analysed to extract useful knowledge out of them. For example, this applies to many relevant cases of Industry 4.0 and Smart Cities where, in many applications, raw data are of little use, while the real value comes from the knowledge extracted through AI and Big Data analytics. The current approach for extracting knowledge from data is to centralise them to remote data centres, as many IoT architectures demonstrate \cite{Borgia:2014ab}. This is the case, among others, of the ETSI M2M architecture \cite{ETSI-arch}  where data are transferred from the physical locations where they are generated to some global cloud platform to be processed. 

Such an approach might not be sustainable in the long run because, despite the evolution of mobile networks, their capacity is growing only linearly \cite{Cisco2020}, which makes it impractical or simply impossible to transfer all the data to a remote cloud platform at reasonable costs. Furthermore, data might also have privacy and confidentiality constraints, which might make it impossible to transfer them to third parties such as global cloud platform operators. 
There are several scenarios where these constraints are relevant. One of the most important are applications in the Industry 4.0 (or Industrial Internet) area, where data analytics is one of the cornerstones \cite{Kagermann:2013aa}. 
However, companies might have severe concerns in moving their data to some external cloud provider infrastructure due to confidentiality reasons. On the other hand, they might not have the competences and resources to build and manage a private cloud platform. Moreover, real-time delay constraints might require that data elaboration or storage is performed at the edge, i.e., close to where data is needed, rather than in remote data centres.  
These trends and needs push towards a decentralisation of data analytics approaches towards the edge of the network, where the paradigms of edge computing, such as Fog Computing~\cite{Bonomi:2012aa},  Multi-access Edge Computing~\cite{Lopez:2015aa}, Cloudlets~\cite{Satyanarayanan:2009aa}, can address the aforementioned problems.\footnote{While there are architectural differences between these paradigms and in particularly between edge and fog computing, in the following we use the terms interchangeably, as those differences do not impact on the focus of the paper.}

Luckily, many machine learning (ML) algorithms at the basis of fundamental data analytics tasks such as classification, regression, and clustering admit distributed formulations, which can be used to implement decentralised data analytics. The main idea of distributed ML algorithms is to derive local models from partial datasets at several locations, and then exchange information across locations to refine the partial models.\footnote{Clearly, distributed ML assumes a certain degree of collaboration between nodes. This might be guaranteed either because nodes may be under the same controlling entity (e.g., the nodes inside a factory), or more in general through appropriate incentive schemes (e.g., for crowd-sensing applications \cite{Valerio:2016ab,Valerio:2016aa,Valerio:2017ab}). This is an orthogonal problem concerning the focus of this paper.}

In this paper, we tackle the issue of optimal configuration of such distributed ML algorithms. 
Specifically, given a set of nodes generating data, it is possible to identify a whole range of configurations, from fully centralised to fully decentralised, for a given ML algorithm. Each configuration is characterised by the set of \emph{collection points} where data are collected (possibly from more nodes generating them), and partial models computed (and refined via collaboration). As we will discuss in detail in the paper, given a target accuracy for the learning task, each configuration is characterised by (i) a network cost, required to move data to the collection points and exchange information for collaborative learning, and (ii) a computational cost at the collection points. The key contribution of this paper is to provide an analytical model that identifies the optimal operating point, i.e., the optimal set of collection points to be used.

More in detail, the contribution of the paper is as follows.  As explained in Section \ref{sec:learning_algo}, it is known that, by exchanging partial models updates between collection points and through multiple rounds of training on the local (partial) datasets, decentralised ML algorithms can achieve any target accuracy. However, this comes at a cost in terms of network traffic and computation. We define appropriate cost functions for both classes of costs in Section \ref{sec:learning_algo}. Thus, we develop an analytical model that, for any given number of collection points, provides the total cost of achieving a target accuracy. The model takes into consideration (i) the number of collection points, (ii) the amount of data at each collection point, (iii) the cost for transferring data on the network (iv) the cost for processing data. 
Based on the model, we are finally able to obtain the optimal operating point for a target accuracy, i.e., the optimal number of collection points such that the target accuracy is achieved at the minimal total cost.

We extensively validate the model through simulations. Specifically, we compare the predictions of our model with the optimal operational point (obtained from exhaustive search) for different types of computational cost functions. We show that the model is in general very accurate in predicting cost of training a decentralised ML algorithm, and identifying the optimal operating point. The model shows that the optimal operating point is almost always at an \emph{intermediate} aggregation level, between full centralisation and full decentralisation. Moreover, it also shows that a significant additional cost is paid when the ML algorithm works in the simplest fully-decentralised and fully-centralised configurations. This justifies the use of analytical tools like the one proposed in the paper to optimally configure decentralised ML operations. In fact, our model admits numerical and, in some cases, closed-form expressions for the optimal operating point, which are quite efficient to compute. Therefore, it can be used as a design tool to configure a system based on decentralised ML.

The key take-home messages of this paper are:
\begin{itemize}
\item \textbf{``decentralisation helps''}, since the optimal operating point is, in almost all cases, an intermediate point between full centralisation and decentralisation;
\item \textbf{``modelling tools help"}, since operating a decentralised ML at optimal operating point reduces quite significantly the cost with respect to more naive solutions;
\item the \textbf{shape of the compute cost function} plays a significant role in determining the optimal operating point. This means it can be tuned by edge service providers to drive the usage of their infrastructure according to specific policies;
\item the relative \textbf{costs of communication vs. computing} may push the optimal operating point either towards more or less decentralisation.
\end{itemize}

The rest of the paper is organised as follows. In Section~\ref{sec:related} we review the related literature. In Section~\ref{sec:problem}, we present the reference scenario, and we define the general structure of the cost model. In Section~\ref{sec:learning_algo} we present the distributed learning algorithm considered in the paper, while in Section~\ref{sec:costanalysis} we provide the mathematical formulation of its cost, for a given accuracy. In Section \ref{sec:settings} we introduce the datasets and the methodology used to evaluate our proposal. In Section~\ref{sec:perf}, we validate the model through simulations, and present a comparative analysis of the performance at the optimal operating point, as well as a sensitivity analysis of the optimal operating point against key parameters. Finally, Section~\ref{sec:conclusions} concludes the paper.

\section{Related work}
\label{sec:related}

The execution of  machine learning tasks at the edge, in the literature, is considered from several perspectives.

A very important body of work deals with distributed learning algorithms that are suitable for being executed in fog  scenarios. This is the case of the \emph{Federated Learning} Framework initially proposed by \cite{Konecny2015,McMahan2016c}. According to this framework, several devices coordinated by a central entity (i.e., a parameter server) and holding some local data, collaboratively train a global model (e.g., an artificial neural network) on the local data. During the process, the information exchanged between the device are only models' updates and other related information, but never raw data. Federated Learning is an iterative procedure spanning over several communication rounds until convergence is reached. Based on this paradigm, several modifications have been proposed concerning (i) new distributed optimisation algorithms~\cite{Wang2018,Amiri2019, Karimireddy2019,Mohri2019} and (ii) privacy-preserving methods for federated learning~\cite{Mao2018,MOTHUKURI2021619}.

Alternatively,  other approaches do not rely on a centralised coordinating server. In~\cite{Valerio:2016aa,Valerio:2017ab}  authors propose a distributed and decentralised learning approach based on Hypothesis Transfer Learning (HTL). Similarly to the Federated Learning framework, authors assume that several devices hold a portion of a dataset to be analysed by some distributed machine learning algorithm. The aim of \cite{Valerio:2016aa,Valerio:2017ab} is to provide a learning procedure able to train, in a decentralised way, an accurate model while drastically limiting the network traffic generated by the learning process. 
Other decentralised learning mechanisms propose methods based on \emph{gossiping} where the devices exchange the models updates only with their neighbourhood. This body of work builds on the initial ideas on asynchronous decentralised optimisation for randomised graphs \cite{Boyd:2006aa,Ram:2009aa} and time varying graphs \cite{Nedic2009}. Currently, the main challenges addressed in this part of the literature regard: scalability issues~\cite{Daily2018}, network reliability~\cite{Yu2019}, the reduction of the high communication costs induced by decentralised optimisation algorithms~\cite{Lian:2017aa,koloskova2019decentralized}, the need to cope with time-varying graphs~\cite{Nedic:2014aa,Nedic2015,Nedic:2017aa} and the problem of data heterogeneity among devices~\cite{Karimireddy2019}. 
Another body of work focuses on the coordination/orchestration of the learning process at the edge. Specifically, authors of \cite{Dey2018,Li2018} address the problem of offloading the computation needed to train a learning model from mobile devices to some edge/fog/cloud server. Authors of \cite{Li2016a} presents an edge solution in which the edge server coordinates the collaboration of several mobile devices that have to train models for object recognition. 

The closest work to the one proposed in this paper is presented in \cite{Valerio:2016ab} where authors studied if there exists a trade-off between the accuracy and the network performance connected to the execution of a decentralised analytics task in IoT scenarios. Although the methodology followed in \cite{Valerio:2016ab} is mostly empirical, the paper shows that there could exist such type of trade-off and that, once found, it is possible to optimise the configuration of the distributed learning process to reduce the related network cost significantly.
Another interesting solution with some similarity with this paper (i.e., the idea of using the convergence bounds of a learning algorithm) is presented in \cite{Wang2019}. Authors propose an adaptive federated learning mechanism suitable for resource-constrained edge systems. The solution exploits the convergence bounds of the learning algorithm to design a control algorithm that adapts the number of local model's updates during training and the number of communication rounds in order to be efficient. A similar solution is presented in \cite{Tran2019} where authors propose a control algorithm for optimising the trade-off between execution time and energy spent during the execution of a federated learning algorithm. 

In this paper, which extends our prior work in~\cite{Valerio:2017aa}, we neither propose a decentralised learning procedure for fog scenarios, nor a control algorithm for optimising the execution of such a procedure. Conversely, we address the problem of identifying, through an analytical tool, the optimal configuration of a broad class of decentralised learning algorithms, trough which a number of fog devices can (i) aggregate a certain amount of data and (ii) run a specific decentralised learning algorithm on them. Differently from \cite{Valerio:2017aa}, in our model we take into account, at the same time, both the network traffic generated by the data collection, the communications triggered by the distributed learning algorithm and the computational cost associated to the process. We solve the model both numerically and analytically (when possible), showing that it also admits closed-form expressions of the optimal aggregation level in certain cases.
We then analyse the accuracy of the model predictions compared to the optimal points found via simulations with exhaustive search over all possible configurations. We compare the cost penalty paid when configuring the ML task, instead of at the optimal point identified by our model, at the two simple extreme configurations, fully centralised and fully decentralised. Finally, we study, via the model, the sensitiveness of the optimal operating point with respect to the values of its key parameters.  To the best of our knowledge, this is the first time that such an analytical framework is proposed in the literature. A key distinguishing feature of our work is that our methodology can be applied to any distributed learning algorithm for which it is possible to find an analytical expression for the communication and the computation needed to converge to a fixed accuracy.
\section{Problem definition}
\label{sec:problem}

We consider a scenario (represented in Fig.~\ref{fig:system}) in which there are $m_0$ devices (e.g., high-end sensors of workers' devices in a smart factory, but also more powerful devices such as edge gateways) collecting data in their local storage. In our analysis, we assume that each device collects, on average, a certain amount of data $n_0$, each composed by \emph{d} features of equal size.\footnote{This feature model is grounded in the fact that, for example, smart devices are equipped with a number of sensors, whose readings can be considered together to enrich the extractable knowledge.} Therefore, the average total amount of data collected by all the devices is $N=m_0n_0d$.
Although this is a simplifying assumption taken for pure modelling reasons, it does not affect the generality of our results, as shown by the validation results presented in Section \ref{sec:perf}.

\begin{table}[ht!]
	\centering
	\caption{Main notation used in the paper.}
\begin{tabular}{ll}
\toprule
Symbol & Description\\ 
\midrule
$C_A$ & Algorithm Network Cost \\ 
$C_D$ & Data transfer Network Cost \\ 
$C_N$ & Network Cost \\ 
$C_P$ & Computational Cost \\ 
$m_0$ & num. of devices with data\\
$m_1$ & num. of data collection points\\
$n_0$ & num. of initial data per device \\
$N$ & total amount of data in the system \\
$R$ & number of round of the algorithm\\
$d$ & number of features of a data point \\
$\gamma$ & grouping parameter \\
$\theta$ & unitary cost for network communication \\
$\beta$ & unitary cost for computation \\
$\mu$ & parameter relating $\beta$ and $\theta$ \\
$\tau$ & FLOPS to compute a gradient \\
$\varepsilon$ & model accuracy w.r.t the optimal solution \\
$\kappa$ & condition number \\
$w$ & learning model parameters \\
$\omega$ & cardinality of $w$ \\
\bottomrule
\end{tabular} 
\end{table}

Moreover, we assume that all the devices in the system involved in the learning process are homogeneous and computationally capable of processing the amount of data they hold. We expect that even devices such as Raspberry PIs would belong to this class. Low-end devices (such as low-end sensors) involved in the data collection process are assumed to relay their data to more capable devices, where the distributed learning process can be performed. We point out that assuming homogeneous devices holds in all those contexts where devices are owned by a central authority that can decide the kind of hardware to install on them and it can control the process. This is the case, for example, of a fleet of (autonomous) vehicles collecting and processing data under a limited resource-budge; workers' smart devices or intelligent production machines in a smart factory produced by the same vendor that, instead of sending the collected data to the cloud it process them directly at the edge; smartphones provided by employers to employees that can extract knowledge from their data using schemes that fall within the framework of federated learning.

\begin{figure}[ht]
\centering
\includegraphics[width=\columnwidth]{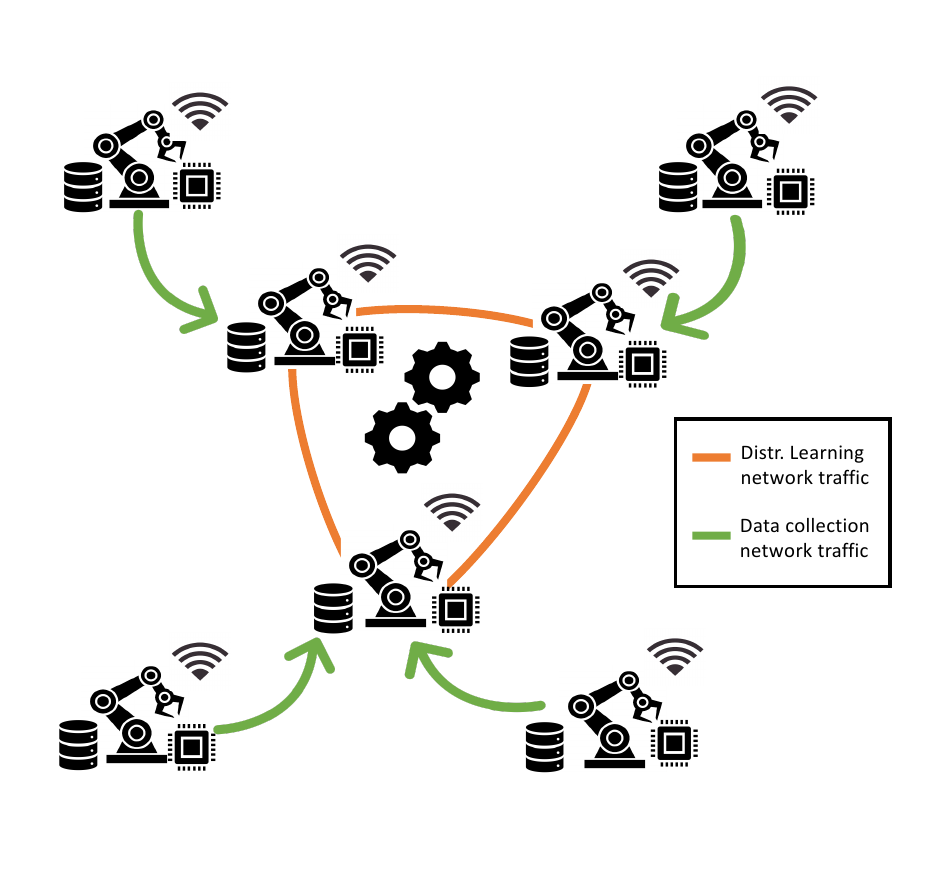}
\caption{Reference scenario.}
\label{fig:system}
\end{figure}

In our model, the data analytics process can be executed on a number $m_1$ of collection points (CPs) in the range $[1,m_0]$, where $m_1=1$ represents the fully centralised case and $m_1=m_0$ the fully decentralised one. The average number $n_1$ of data points collected on the $m_1$ CPs is  calculated as $n_1=\frac{n_0m_0}{m_1}$.

Our model provides, for a given learning task and a fixed target $\varepsilon$-accuracy (i.e., the optimality gap between the estimated and the optimal solution of the associated optimization problem), the optimal number of collection points, i.e., the number of collection points that minimises the associated cost.
 
The starting point for our model is the definition of the costs for the distributed training of the ML model. Specifically, we define the cost as
\begin{eqnarray}
    C &=& C_{N} + C_{P}
\end{eqnarray}
where $C_{N}$ represents the network traffic generated during the data collection/aggregation and the training process and $C_{P}$ is the term that represents the computational burden that CPs have to face in order to execute the distributed learning process (we discuss later on how to make the two cost terms homogeneous, and how to flexibly account for the relative importance of networking and computation costs in the total balance). 
Therefore, the cost function can be interpreted in many ways depending on the context. For example, it could be seen as the \emph{energy} cost imposed on the devices, but also as the \emph{monetary} cost imposed by network and computing operators to a third party requiring the data analytics service.

Without loss of generality, to have a single control parameter in the model, in the following we define $\gamma$ as $$\gamma=\frac{m_0}{m_1}$$ 
representing the level at which we group data at collection points (we call it \emph{grouping parameter}). Clearly, it holds that $\gamma \in [1,m_0]$, ranging from completely decentralised ($\gamma=1$) to completely centralised ($\gamma=m_0$).

The goal of this paper is to find the optimal value of $\gamma$ for a given target $\varepsilon$-accuracy. Our idea is that the quality of the final learnt model, which might be expressed by the model's accuracy, might connected to the optimality gap. Therefore, by binding the optimization variable $\gamma$ of our model to the a target optimality gap $\varepsilon$ we are implicitly creating a connection between the number of collection points and the final quality of the model.  Mathematically, we want to solve the following problem:
\begin{eqnarray}
    \argmin_{\gamma}  & C_N(\gamma) + C_P(\gamma)\label{eq:opt_problem}\\
\mathrm{s.t.} & \varepsilon_{\gamma} \leq \varepsilon \nonumber \\
& 1\leq \gamma \leq m_0 \nonumber
\end{eqnarray}
Where $\varepsilon_{\gamma}$ is the estimated $\varepsilon$-accuracy obtained by the distributed optimization algorithm for a given $\gamma$. The formal definition of $\varepsilon$ is provided in Section \ref{sec:costanalysis}, while in Section \ref{sub:methodology} we define how we estimate $\varepsilon_{\gamma}$.
The key challenge to solve Problem~(\ref{eq:opt_problem}) is to obtain an analytical formula for the two components of the cost, for a general enough class of learning algorithms. Before  presenting all the mathematical details of the solution of our model, in the next section we describe the class of learning algorithms that we consider in the rest of the paper. 

Note that the costs in Problem~(\ref{eq:opt_problem}) are derived under simplifying assumptions, to make their analytic forms tractable. The main assumptions considered are: (i) collection points are homogeneous; (ii) data available at collection points are iid; (iii) costs are derived as in the average case. We are aware some of these assumptions may not hold in our target environment. Therefore, in Section~\ref{sec:perf}, we assess through simulations to what extent violating these assumptions could affect the predictions of our model, which, we anticipate, proves to be quite robust.


\section{Distributed learning algorithms}
\label{sec:learning_algo}
In this section we briefly discuss the class of distributed stochastic optimisation algorithms that fall within the scope of our framework. We point out that the analysis in this paper can be applied to any ML algorithm that (i) admits a decentralised or distributed definition; (ii) analytic (even approximated) bounds on the number of communication rounds to achieve the sought accuracy and (iii) computation cost expressed as the complexity of the algorithm. The specific algorithm we consider in the following only serves the illustrative purpose to explain the process and derive concrete results in one relevant - though already quite significant - case.

Precisely, we focus on variants of the Stochastic Gradient Descent (SGD) that are i) communication and computation efficient by design and ii) equipped with the analytic expression of their communication and computation complexity bounds. A class of algorithms that owns these properties belongs to the category of the \emph{stochastic variance reduced methods}. The solutions in this category address one of the SGD's major deficiency: the high variance of the gradients. Briefly, SGD is one of the most widely used optimisation algorithms in both centralised and distributed settings. SGD was initially proposed to improve the efficiency of the Gradient Descent algorithm by substituting the evaluation of the full gradient (i.e., computed using all the data available) with an approximation computed on a smaller sample of data (i.e., from one sample to a subset of the whole dataset, called mini-batch). However, since the gradients computed by SGD are an approximation of the full gradient, they can have high variance, which is harmful for the convergence of the algorithm to a good solution. In SGD, the only way to contrast the variance issue is to vary or adapt the step-size of the updates, iteration by iteration, i.e., typically by setting a hyper-parameter called learning rate that decreases at each iteration of the algorithm. However, SGD with decreasing learning rate fails to achieve linear convergence.\footnote{The linearity refers to the number of updates w.r.t. the size of the dataset.} Stochastic reduced variance methods address the variance problem such that SGD achieves linear convergence, typically when the loss function is strongly convex. Some of the most prominent algorithms that belong this category, e.g., SAG\cite{Schmidt2017}, SAGA\cite{Defazio2014}, SDCA\cite{Schalev2013} and SVRG~\cite{Johnson:2013aa}, to mention a few, are designed for centralised settings but they have been adapted to work also in distributed ones.

In this paper, we focus on DSVRG~\cite{Lee:2015aa,Lee:2017aa}, one of first distributed versions of SVRG. We select it because it is well studied in the literature both theoretically and practically. Similarly to the other distributed learning algorithm of the same class, it is possible to derive, under few theoretical assumptions, the analytical expression for (i) the cost in terms of generated network traffic and (ii) the cost in terms of computation. In the following we provide a brief description of DSVRG and we comment the complexity bounds assumptions. Our aim is to provide the overall idea about how DSVRG works.\footnote{Interested readers are referred to \cite{Friedman:2001aa} for a detailed presentation.}

 In supervised learning, the goal is to learn the parameters of a mathematical model describing the relationship between data \emph{patterns} and the corresponding \emph{labels}. Formally, let us suppose that there exists a set of i.i.d. data points $\{s_1, s_2, \dots, s_N\}$ belonging to an unknown distribution $\mathcal{D}$. Each data point $s$ is a pair $(x,y)$ where $x\in\mathbb{R}^d$ is a vector of features (a pattern) and $y\in\mathbb{R}$ is the label associated to $x$. $y$ can be either a continuous value (for regression problems) or a discrete value (for classification problems). Moreover, let us suppose that the points of the dataset $D$ are grouped in $K$  separate (and possibly non-overlapping) subsets $ S_k$ physically stored on $K$ different devices. Therefore, the complete dataset is defined as $D=\bigcup_{k=1}^K S_k$. In distributed supervised leaning each device can only operate on local data. During training, it computes a loss function defined as follows.
\begin{equation}
\mathcal{L}_k(S_k,w)=\frac{1}{|S_k|}\sum_{i=1}^{|S_k|}\ell(q(x_{i,k},w),y_{i,k}) + \lambda\mathcal{R}(w)
\label{eq:loss_general}
\end{equation}
In Equation (\ref{eq:loss_general}), the first term is the average approximation error on the training set, while the second term is a regularisation factor that we describe in the following. The function $\ell(q(x_{i,k},w),y_{i,k})$ is a loss function computed on the device $k$, i.e. it measures the error of the model $q(x_{i,k},w)$ predicting the real values of $y_{i,k}$ for the data points in $S_k$. $w$ is the parameters' vector  of the model $q$. The function $\mathcal{R}(w)$ is a regularisation term whose purpose is to ease the search for the solution of the optimisation problem, producing a model which is less complex and, thus, less prone to over-fitting. Finally, the scalar $\lambda$ is a tuning parameter used to regulate the balance between the error term and the regularisation term. 

In the following we denote with $w$ the global model's parameters to be learned (note that, the model parameters will be also referred to as weights, without loss of generality). The goal of the learning algorithm (e.g., DSVRG) is to find the values of parameters $w$ that minimise the average loss over all devices. Formally, the objective is to find the parameters $w^*$ such that
\begin{equation}
    \label{eq:ermr}
    w^*=\argmin_{w}  \frac{1}{K}\sum_{k=1}^K \mathcal{L}_k(S_k,w)
\end{equation}

To solve the problem (\ref{eq:ermr}) DSVRG assumes the presence of a central unit whose task is to coordinate the communication between the devices and monitor the overall learning process. 
A typical assumption is that the central unit (also called centre) has complete knowledge of the system, i.e. information about devices, about the data they hold, etc. We will discuss in the following how to relax this assumption, thus making the algorithm decentralised.
%

The algorithmic description of DSVRG can be split into two logically separate procedures, one executed by the ``centre" and one executed by each of the other devices. For ease of explanation, in the following we describe DSVRG as if a centre would actually exist. However, in our scenario we consider that collection points assume the role of centre in a round robin fashion, thus making DSVRG totally decentralised and, as we show, also reducing the overall network cost. 

At the beginning ($r=0$), the centre broadcasts an initial estimate of the solution $\tilde{w}_{r}$ to all the devices. Each of them computes in parallel the new gradient over all data points contained in their local datasets and sends it back to the centre. The centre computes the average gradient $\tilde{h}_{r}$ from those collected by the client devices. At this point, the centre selects a single device $k$ and sends to it the average gradient $\tilde{h}_r$. Device $k$ updates for $T$ local iterations its local estimate of the weights $w_{r,t}^{(k)}$ using Eq.~(\ref{eq:wupdate}) and the current  estimate of the global model as in Equation (\ref{eq:wgiter}), whose meaning we explain next. Then, it computes the new global model $\tilde{w}_{r+1}$ (using Eq.~\ref{eq:wgupdate}) and sends it to the centre. Finally the centre is rotated, i.e., the centre changes in a round robin fashion.

\begin{eqnarray}
w_{r,t+1}^{(k)} &=& w_{r,t}^{(k)}-\eta\left(g(w_{r,t}^{(k)},s_{t}^{(k)}) - g(\tilde{w_r},s_{t}^{(k)}) + \tilde{h}_r\right)\label{eq:wupdate} \\ 
\bar{w}_{r,t+1} &=&\frac{w_{r,t+1}^{(k)}+t \bar{w}_{t}^{(k)}}{t+1} \label{eq:wgiter}\\
\tilde{w}_{r+1} &=& \bar{w}_{r,T-1}^{(k)} \label{eq:wgupdate}  
\end{eqnarray}


%
where $t=0,\dots,T-1$, $w_{r,0}^{(k)}=\tilde{w}_r$ and $\bar{w}_{r,0}^{(k)}=\bs{0}$. The rationale of Equation~(\ref{eq:wupdate}) is to update the estimate of the parameters with a quantity that is  proportional to (i) the local gradient ($g(w_{r,t}^{(k)},s_{t}^{(k)})$), (ii) the gradient of the global current solution ($g(\tilde{w}_r,s_{t}^{(k)})$) and (iii) the average global gradient ($\tilde{h}_r$). Specifically, the local gradient in (i) is computed by using only the current (at time step $t$) values of the local weights $w_{r,t}$ on the local dataset. The gradient in (ii) is computed using the current (at time step $t$) estimate of the overall parameters ($\tilde{w}_r$) on the local dataset. Each individual collection point $k$ has all the input parameters to compute both gradients. Therefore, the update of the local weights through Equation~(\ref{eq:wupdate}) takes into account both local and global models parameters.
 The rationale of Equation~(\ref{eq:wgiter}) is to update the global weights as a weighted average of the old value in the previous round and the updated local weights computed through Equation~(\ref{eq:wupdate}). Finally, the last update (at $T-1$) of the average weights $\bar{w}_{r,T-1}^{(k)}$ becomes the updated global model for the next round $r+1$ as in Equation~(\ref{eq:wgupdate}).
%
This procedure is repeated for a number $R$ of rounds.

Note that the algorithmic description presented here is only a short summary of the complete ones. Their purpose is to provide to the reader an intuitive idea of how DSVRG works. For more details on the internals of the algorithm, the interested reader can refer to the original paper \cite{Lee:2015aa,Lee:2017aa}. 

As anticipated, the asymptotic convergence behaviour of DSVRG can be summarised with the following communication bound $$O\left((1+\frac{\kappa}{n_1})\lgg\right)$$, whose precise meaning will be discussed in Section~\ref{sec:costanalysis}. The assumption made to derive this bound are: (i)~the individual loss function is convex and L-smooth,\footnote{This means that the individual loss function is differentiable and its gradient is L-Lipschits continous.} (ii)~the average loss function is strongly convex, (iii)~the data are i.i.d. among devices and (iv)~each device has access to extra data during training, i.e., in the original paper, in order to guarantee good working conditions for the algorithm the authors assume that devices can redistribute part of their data with the other devices. The extra data is used for computing the update on the $k$-th device, while the local data is used for computing the global gradients. Since in this paper we target fog/edge scenario where data might be non-iid and devices might not be allowed to access the extra data, in the experimental evaluation we violate willingly these assumptions, to test the robustness of our model in conditions that are not compliant to the theoretical assumptions of the learning algorithm considered. Therefore, the only assumption satisfied regards the properties of the loss function, as it will be clear in the following. 

\section{Cost Model}
\label{sec:costanalysis}
\subsection{Network Cost of DSVRG}

To analyse the network cost of DSVRG it is necessary to first define the $\varepsilon$-accuracy of the algorithm. Remember that the goal of our analyses is to identify the optimal operating point given a target \eacc.

The \eacc is defined as $$|\mathcal{L}(\tilde{w})- \mathcal{L}(w^*)|\leq \varepsilon .$$ It means that for the model with parameters $\tilde{w}$ the value of the objective function is far from the optimal solution $w^*$ at most by $\varepsilon$. The optimal solution $w^*$ is in general unknown. 
We recall that the value of $\varepsilon$ must not be confused with the value of the generalisation performance of the learning algorithm (i.e., model's accuracy). Clearly, these quantities are related to each other because a more accurate solution translates into a better quality of the model. However, quantifying in advance the exact relation between the two is very difficult (if not impossible) and strongly problem-dependent.

%
%

As we will show in the following, $\varepsilon$ directly affects the number of communication rounds and therefore both the communication and compute costs.
Precisely, as stated in \cite{Lee:2015aa}, the relationship between the $\varepsilon$-accuracy and the number of communication rounds between the centre and the devices is expressed by

\begin{equation}
    \label{eq_rounds}
    R = \left(1+\frac{\kappa}{n_1}\right)\log_2\left(\frac{1}{\varepsilon}\right)
\end{equation}
where $n_1$ is the size of the local dataset held by each Collection Point and $\kappa$ is the condition number of the problem. 
The condition number is defined as the ratio between the first and the last eigenvalues of Hessian of the empirical loss function.
Precisely, $\kappa$ measures how much the output value of the function can change for a small change in the input argument. Computing the exact value of $\kappa$ is computationally expensive, but as specified in \cite{Lee:2017aa} we can consider it proportional to $\sqrt{N}$, where $N$ is the cardinality of the entire dataset. Note that the number of rounds depends on the aggregation level, as $n_1$ (the number of data points available at a collection point) is equal to $\gamma n_0$.

In DSRVG, the amount of communication generated in a single round can be computed as in Equation (\ref{eq:cr}). Note that we consider as unitary cost of communication the cost of sending a single model (i.e., all its parameters).
\begin{equation}
\label{eq:cr}
 C_R = 2(m_1-1)\omega
\end{equation}
Specifically, for each round the centre sends to all the other $(m_1-1)$ Collection Points a parameters vector $w$ and receives from them $(m_1-1)$ gradient vectors. The size of the parameters vector and of each gradient is expressed by $\omega$.
For the sake of clarity, in the following we describe how we derive the communication costs. Without loss of generality, we assume that the node acting as centre also executes the local updates, i.e,. Eq.(\ref{eq:wupdate}-\ref{eq:wgupdate}). This means that the centre does not generate traffic for sending the vector $\tilde w_t$ and the average gradient $\tilde h_t$ to itself. Finally, we assume that, at a given round, the current centre indicates the centre for the next round while sending the new model, thus with negligible network cost. Again without loss generality, we consider that the centres are selected according to a round robin policy. The specific policy does not impact on the network cost as long as the current centre has locally the information required to identify the next one.

The total amount of traffic generated to obtain an $\varepsilon$-accurate solution is therefore
\begin{equation}
    C_A = C_R * R = 2\omega(m_1-1)(1+\frac{\kappa}{n_1})\log_2\frac{1}{\varepsilon}
\end{equation}
In addition to $C_A$, which is the amount of traffic generated by the learning algorithm, we also have to account for the network traffic to collect raw data at the $m_1$ collection points, defined as $C_D$. This is easily computed as each node that is not a collection point ($m_0-m_1$ nodes in total) sends the local data ($d n_0$ on average) to one collection point. The final cost in terms of network traffic is thus provided in Equation (\ref{eq:cost_model}), where $\theta$ is the cost for transmitting a single feature between two nodes.\footnote{More precisely, it is the cost of sending a message of average size, computed over the sizes of messages required to transfer raw data (each data point generates $d$ messages) and of the messages required by DSVRG (each model generates $\omega$ messages)} Note that $\theta$ highly depends on the specific communication technology used.


\begin{equation}
    \label{eq:cost_model}
    C_{N} = \theta(C_A + C_D) =  \theta C_A + \theta(m_0-m_1)d n_0
\end{equation}

Note that Equation~(\ref{eq:cost_model}) is a model of the average case, where in particular all collection points store the same amount of data, which are iid. In Section~\ref{sec:perf} we validate the accuracy of the model by relaxing these assumptions.

\subsection{Computational Cost of DSVRG}
First of all, clearly only collection points incur in computation costs. We model the computation required by all collection points with Equation~(\ref{eq:prepsi}) below:
\begin{equation}
P(\gamma)= G(\gamma)R(\gamma).
\label{eq:prepsi} 
\end{equation}
In Equation~(\ref{eq:prepsi}), $R(\gamma)$ denotes the number of rounds to achieve $\varepsilon$ accuracy, and is again provided by Equation~(\ref{eq_rounds}) (in this case we only make it explicit the dependence on $\gamma$). Moreover, $G(\gamma)$ is the number of floating point operations needed to compute the gradients by all collection points during a round. Specifically, this is equal to the number of data points over which the gradients need to be computed multiplied by the cost for computing a single gradient, that we refer to as $\tau$. Specifically, $\tau$ is the number of floating-point operations (FLOPS) needed to compute a single gradient vector, given an input pattern, and it is typically proportional to the size of the model, $\omega$ in our case. According to DSVRG the number of gradients to compute at each collection point is equal to the number of local data points, i.e., $n_1 m_1$. In addition, the centre needs to compute one additional gradient per local point, needed by the term $g(\tilde{w}_t,s_{k,t})$ in Equation~\ref{eq:wupdate}. Thus, the total number of gradients to compute is $n_1 (m_1+1)$.\footnote{In the paper we consider the single gradient case because it is how DSVRG is designed. Note that it is possible to adapt the term for the FLOPS-count to the case where an algorithm takes in input batches of samples and the compute unit is optimised to treat them efficiently. A way might be to apply a scale factor to $P(\gamma)$ to catch such an improvement. However this extension is not within the scope of this paper.}
Therefore, $G(\gamma)$ becomes as follows:
\begin{equation}
    G(\gamma) = \tau n_1 (m_1+1)
    \label{eq:G-gamma}
\end{equation}

From Equation~(\ref{eq:prepsi}) we obtain the final expression of the cost for computing 
the data $C_P(\gamma)$ at a certain aggregation level $\gamma$: 
\begin{equation}
 C_P(\gamma)=\psi(\gamma)P(\gamma).
\end{equation}
$\psi(\gamma)$ represents the cost of a single floating-point operation. We assume that $\psi(\gamma)$ depends on the total number of operations needed for the computations at each collection point, or, better, on the aggregation level $\gamma$. This allows us to model the cost of using computation on resource-limited collection points, as well as the economy of scale available at big collection points run by infrastructure operators, as well as their strategies to attract compute services. For example, in cases where collection points are resource-constrained devices (e.g., Raspberry PIs), computations might easily saturate local resources. Therefore, it is reasonable to use a cost function that scales super-linearly with the amount of data handled by each collection point, and thus with the aggregation level. On the other hand, when CPs are nodes of an operator's infrastructure, computation might be very cheap (assume, e.g., cases where CPs are virtual machines running on a set of edge gateways, typically over-provisioned with respect to the specific needs of the learning task). In such cases, it might be reasonable to model the cost as a linear or even sub-linear function of the amount of data each CP handles. 
To capture all these possibilities at once, we define $\psi(\gamma)$ as in Eq.~(\ref{eq:psidef}):
\begin{eqnarray}
\label{eq:psidef}
\psi(\gamma) &=&   \beta f(\gamma)
\end{eqnarray}
 where $\beta$ is the cost of a single floating point operation required to compute gradients and $f(\gamma)$ captures the sub-/super-linear dependence with the number of data handled by a CP. It is thus defined as in Equation~(\ref{eq:deffg}). 
\label{eq:deffg}
\begin{equation}
f(\gamma)  =  \gamma^{\alpha}
\end{equation}
where $\alpha$ is a tuning parameter that we use to control the shape of $f(\gamma)$. 
Precisely, by tuning appropriately $\alpha$ we can obtain three different regimes of cost, as shown in Table~\ref{tab:regimi}. In the paper, we analyse and solve our model with each one of them.
Note that this formulation of computation costs does not only capture technical constraints, but also simple service provider policies related to billing of computing infrastructures. We anticipate that the values used in the rest of the paper for $\alpha$ are only illustrative of the behaviours of possible configurations in the three mentioned regimes.

\begin{table}[ht!]
    \centering
    \caption{Possible parametrisation of $f(\gamma)$.}
    \begin{tabular}{ll}
        \toprule
        $f(\gamma)$ & $\alpha$ \\ 
        \midrule
        linear & $1$  \\ 
        sublinear & $(0,1)$ \\ 
        superlinear & $(1,\infty)$ \\ 
        \bottomrule
    \end{tabular} 
\label{tab:regimi}
\end{table}

Before presenting, in the next section, the analysis of the optimal operating point that can be obtained through our model, let us highlight a few important features. First, in our model, the balance between the communication and compute costs are defined by the two parameters $\theta$ and $\beta$. The former defines the cost of sending a single parameter (e.g., a float), while the latter defines the cost of performing a floating point operation in the gradient computation. Moreover, the dependency of the model with the specific algorithm considered is limited to the formula of the number of rounds $R(\gamma)$ to achieve $\varepsilon$-accuracy (Equation~\ref{eq_rounds}) and the number of floating point operations to compute all gradients, $G(\gamma)$ (Equation~\ref{eq:G-gamma}). The rest of the model does not depend on the specific definition of the DSRVG algorithm, and can thus be used also with other ML algorithms. Also remember that DRSVG already represents a broad class of ML algorithms itself.


\section{Cost model analysis}
\label{sec:model_analysis}
In this section we analyse the objective function of the optimisation problem defined in Section \ref{sec:problem} for the case of the learning algorithm presented in Section \ref{sec:learning_algo}.
For the sake of clarity, let us rewrite the cost model expanding $C_A(\gamma)$ and $C_P(\gamma)$:
\begin{eqnarray}
	C(\gamma)  &=& \theta\left((C_A(\gamma) + C_D(\gamma)\right)+C_P(\gamma) \nonumber \\
    &= &  2 \omega \theta \left(\frac{m_0}{\gamma}-1\right)R(\gamma) +\nonumber\\
    & & +   \theta\left(m_0 - \frac{m_0}{\gamma}\right)n_0 d +\nonumber\\
    & & +   \beta f(\gamma) \tau(n_0m_0+n_0\gamma)R(\gamma)
    \label{eq:cg}
\end{eqnarray}


%
First we enclose the conditions for which the objective function in Eq.(\ref{eq:cg}) is convex in the following theorem. 


\begin{theorem}[Convexity]
\label{th:convex}
There exist $\alpha_1,\alpha_2 \in (0,1)$ where $\alpha_1<\alpha_2$ such that $C_A(\gamma) + C_P(\gamma)$ is convex for $\alpha \in (0,\alpha_1] \cup [\alpha_2,\infty)$  and  $\gamma \in [1,\infty)$.
\end{theorem}
\begin{proof}
  See Appendix \ref{app:proofthcvx}.
\end{proof}
\begin{remark}
\label{rm:convex}
The $C_D(\gamma)$ is a concave function (as proved in Appendix \ref{app:proofthcvx}). Therefore the convexity of the Eq. (\ref{eq:cg}) depends on the magnitude of $C_D(\gamma)$ w.r.t. $C_A(\gamma)+C_P(\gamma)$. Considering Theorem \ref{th:convex}, when $C_D(\gamma)$ is negligible in comparison to $C_A(\gamma)+C_P(\gamma)$ then the function $C(\gamma)$ is convex. 
\end{remark}

Assuming the cost for transferring data to collections points negligible with respect to the sum of the other two terms, according to Theorem (\ref{th:convex}) and Remark (\ref{rm:convex}), when the cost for computing grows either linearly or super-linearly with the level of aggregation $\gamma$, we can identify a unique solution that minimises the objective function (\ref{eq:cg}). Conversely, for sub-linear aggregation cost, the convexity depends on the specific value taken by $\alpha$ in the range $(0,1)$. The analytical form of $\alpha_1,\alpha_2$ is provided in Appendix \ref{app:proofthcvx}.

We have been able to find the closed-form solution of the optimal value only for following two cases: (i) when we consider the network cost alone, i.e., $C(\gamma)=\theta(C_A(\gamma)+C_D(\gamma))$, as already presented in~\cite{Valerio:2017aa} and reported in Theorem~\ref{th:net} for completeness and, (ii) when $f(\gamma)$ is linear, i.e. with $\alpha=1$, as stated by Theorem~\ref{th:lin}. Conversely, for the general case, i.e, when $\alpha$ is left symbolic, we have not been able to find a closed-form solution but, as we will show in Section \ref{sub:validation}, it is possible to find the numerical solution of our model using a standard solver. 

The case considering network cost only is represented in our model by setting $\beta=0$. Therefore, Eq. (\ref{eq:cg}) becomes:
\begin{eqnarray}
C(\gamma) &=& C_N(\gamma) \nonumber\\
 & =& \theta(C_A(\gamma) + C_D(\gamma)) \nonumber \\
& = & 2 \omega \theta \left(\frac{m_0}{\gamma}-1\right)R(\gamma) +\nonumber\\
 &   & +   \theta\left(m_0 - \frac{m_0}{\gamma}\right)n_0 d 
 \label{eq:netonly}
\end{eqnarray}
The solution of Eq. (\ref{eq:netonly}) is provided by the following theorem.  


\begin{theorem}
	\label{th:net}
	When $\beta=0$, $C(\gamma)$ admits only one solution in $\mathbb{R}$. Its expression is given in Equation (\ref{eq:netopt}).
	\begin{equation}
	\label{eq:netopt}
	\tilde{\gamma} = \frac{4 \kappa \lgg m \omega}{d m n^{2} + 2 \kappa \lgg \omega - 2 \lgg m n \omega}
	\end{equation}
\end{theorem}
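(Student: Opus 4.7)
The plan is to reduce the optimization problem to solving the first-order condition $C'(\gamma) = 0$ and to show that, after clearing denominators, this condition is linear in $\gamma$, which immediately gives both uniqueness and the stated closed-form.

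First, I would substitute $R(\gamma) = \bigl(1 + \tfrac{\kappa}{\gamma n_0}\bigr)\log_2(1/\varepsilon)$ from Equation (\ref{eq_rounds}) (using $n_1 = \gamma n_0$) into Equation (\ref{eq:netonly}) and expand the products. Writing $L = \log_2(1/\varepsilon)$ for brevity, $C(\gamma)$ becomes a linear combination of $1/\gamma$, $1/\gamma^{2}$, and terms constant in $\gamma$, plus the data-transfer contribution $\theta(m_0 - m_0/\gamma) n_0 d$. The advantage of this fully expanded form is that each monomial in $1/\gamma$ differentiates into a monomial of the next higher negative power, with explicit coefficients in $\omega$, $\theta$, $L$, $\kappa$, $m_0$, $n_0$, and $d$.

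Second, I would compute $C'(\gamma)$ term by term. Since the only $\gamma$-independent piece of $C$ is $\theta m_0 n_0 d$, the derivative contains no constant term in $\gamma$; it is of the form $A/\gamma^{2} + B/\gamma^{3}$. Multiplying $C'(\gamma) = 0$ through by $\gamma^{3}$ therefore yields a polynomial equation that is purely linear in $\gamma$, of the shape
\begin{equation*}
\bigl(d\, m_0 n_0^{2} + 2\kappa L\, \omega - 2 L\, m_0 n_0 \omega\bigr)\,\gamma \;-\; 4\kappa L\, m_0\, \omega \;=\; 0 .
\end{equation*}
Since this is a genuine linear equation (as long as the $\gamma$-coefficient does not vanish), it admits exactly one root in $\mathbb{R}$, proving the uniqueness claim. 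Solving for $\gamma$ then reproduces Equation (\ref{eq:netopt}) directly, with $k, m, n$ playing the roles of $\kappa, m_0, n_0$.

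The main step requiring care is the algebraic bookkeeping when expanding and collecting coefficients, in particular verifying that the leading coefficient $d m_0 n_0^{2} + 2\kappa L\, \omega - 2 L\, m_0 n_0 \omega$ is non-zero (and, ideally, positive in the physically meaningful parameter regime where $d n_0 > 2 L \omega$ dominates the negative contribution), so that uniqueness genuinely holds and the expression in (\ref{eq:netopt}) is well defined. As a side check, one can compute the second derivative at $\tilde\gamma$ or examine the behaviour of $C(\gamma)$ as $\gamma \to 1^{+}$ and $\gamma \to m_0$ to confirm that the unique critical point is a minimum rather than a maximum, closing the argument that $\tilde\gamma$ is the optimal operating point.
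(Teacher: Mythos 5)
Your proposal is correct and follows essentially the same route as the paper: the paper defers the proof to its prior work, which likewise obtains $\tilde{\gamma}$ as the unique root of the first-order condition (linear in $\gamma$ after clearing the $1/\gamma^{2}$ and $1/\gamma^{3}$ terms) and then confirms it is a minimum by the sign change of $C_N'(\gamma)$ across $\tilde{\gamma}$. Your expansion of $C'(\gamma)$ and the resulting closed form match Equation~(\ref{eq:netopt}), and your remarks on the non-vanishing denominator and the sign check are exactly the details the cited proof supplies.
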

\begin{proof}
	The proof is provided in \cite{Valerio:2017aa} where it is shown that the sign of first derivative of $C_N(\gamma)$ before $\tilde{\gamma}$ is negative and after  $\tilde{\gamma}$ is positive, identifying $\tilde{\gamma}$ as a minimum. 
\end{proof}

%
Let us now present the analytical solution for the case in which $\beta>0$ and the computational cost is linear, in the following Theorem.  

\begin{theorem}
	\label{th:lin}
	Provided Theorem~\ref{th:convex}, when $f(\gamma)=\gamma$, $C(\gamma)$ admits only one minimum $\tilde\gamma$ in the range $[1,m_0]$ and it can be found solving Equation (\ref{eq:gopt}) in $\gamma$. The expression of $\tilde\gamma$ is provided in Appendix \ref{app:defs}.
	\begin{equation}
	\label{eq:gopt}
	C'(\gamma)=0
\end{equation}

\end{theorem}
\begin{proof}
	See Appendix \ref{app:profthlin}.	
\end{proof}
We can exploit the result of Theorems \ref{th:net} and \ref{th:lin} to identify the optimal operating point of our problem in the range $[1,m_0]$ as in Equation (\ref{eq:minnet}). Specifically, in the following $\widehat\gamma$ represents the estimated optimal grouping level obtained through our model, while $\tilde{\gamma}$ is the value of $\gamma$ that minimises function $C(\gamma)$. $\widehat\gamma$ and $\tilde{\gamma}$ may differ, if the latter falls outside of the admissible values of $\gamma$, i.e., $[1,m_0]$. Equation (\ref{eq:minnet}) exemplifies this behaviour.

\begin{equation}
\widehat{\gamma} =\left\{\begin{array}{ll}
1 &  \tilde{\gamma}< 1 \\
\tilde{\gamma}  & 1\leq\tilde{\gamma}\leq m_0 \\
m_0  &\tilde{\gamma}>m_0\end{array} \right.
\label{eq:minnet}
\end{equation}

The interpretation of Equation (\ref{eq:minnet}) is as follows and holds for the both Theorems. If $\tilde{\gamma}$ is below $1$, the function $C(\gamma)$ crossing the domain $[1,m_0]$ is increasing for $\gamma>\tilde\gamma$. Therefore the only viable option for minimising the  costs  is to adopt a fully distributed configuration, leaving data on the source locations. 
Conversely, if $\tilde{\gamma}$ is beyond $m_0$ then  $C(\gamma)$ crossing the domain $[1,m_0]$ is strictly decreasing when $\gamma<\tilde\gamma$. This means that in this case the minimum cost is equivalent to centralised all the data on a single device. 
For all the case in which $\tilde{\gamma}$ gets values in the range $[1,m_0]$ the best option is represented by an intermediate solution. In the simplest case when $\beta=0$, it is possible to provide very intuitive explanations of the form of $\tilde \gamma$, see~\cite{Valerio:2017aa}. 
\section{Evaluation settings and methodology}
\label{sec:settings}
 In this section we present the reference datasets  used in our evaluation (Section~\ref{sub:dataset}), the evaluation methodology and settings  (Section~\ref{sub:methodology}).  Without loss of generality, in the following we consider a binary classification task, and, as customary in the literature (e.g.,~\cite{Valerio:2017aa}), we use the logistic function as error function, i.e, the function $\ell$ of Eq.~(\ref{eq:loss_general}), and the squared norm as regularisation term (i.e., $\mathcal{R}(w)=\|w\|_2^2$).\footnote{The squared norm is a common regulariser in ML. Its physical meaning is to penalise the model's parameters with large values, thus preventing over-fitting issues.}. The loss function thus becomes:

\begin{multline}
\mathcal{L}_k(S_k,w){=}\frac{1}{|S_k|}\sum_{i=1}^{|S_k|}\log(1+\exp(-y_i(x_i^Tw))) + \lambda\|w\|^2_2
\end{multline}
where $\lambda=0.01$. This value has been selected after a parameter search and kept fixed during the whole evaluation. 
\subsection{Datasets description}
\label{sub:dataset}
We base our analysis on three real-world and publicly available reference datasets: Covtype,\footnote{Dataset available at https://archive.ics.uci.edu/ml/datasets/Covertype} Gisette,\footnote{Dataset available at https://archive.ics.uci.edu/ml/datasets/Gisette} and MNIST.\footnote{Dataset available at https://www.openml.org/d/554}
\begin{itemize}
\item Covtype (CT) contains real-world observations related to an environmental monitoring task. It contains 581012 observations, where each one is a vector of 54 features containing both cartographic and soil information, corresponding to a $30m\times30m$ area of forest. The learning task is to use this information to predict what is the main kind of trees covering the area. 
\item MNIST (MN) is a dataset containing images  representing handwritten digits, from 0 to 9. The problems is to classify the handwritten digit. The dataset contains 70k images of size  28x28 (i.e., 784 features). 
\item Gisette (GS) is a handwritten digit dataset based on MNIST. The problem is to separate the highly confusable digits '4' and '9'. This dataset is one of five datasets of the NIPS 2003 feature selection challenge. With respect to original MNIST, the data were modified to make the feature selection more difficult. In particular, each vector contains i) a subset of the initial features mapped into a higher dimensional space and ii) distractor features (i.e., noise) with no predictive power for the two classes. Moreover, the spatial relationships between features are removed through randomisation. The dataset contains 13500 points of 5000 features each.
\end{itemize}
These datasets have been selected because they are reference benchmarks in the ML literature and provide  sufficient data to test our model in various configurations, as clarified in the following. Although they are not necessarily representative of the scenarios described in Section~\ref{sec:intro} (such as Industry 4.0), the generality of our results is not affected because the main purpose of the paper is to assess the accuracy of our analysis and quantify the advantages of decentralised learning, which does not depend on the specific environment used for evaluation.
 Note that, Covetype and MNIST contain seven and ten classes, respectively. Since in this paper we test our model by solving a  binary classification problem, we build the binary datasets following the One-vs-Rest procedure, i.e., the positive examples of the binary dataset belong to one of the classes while the negative examples are randomly selected from the rest of the classes. In the following and for both MNIST and Covtype we present results only for one among all the possible binary dataset. Specifically, results refer to the dataset with positive class 7 for Covtype and 0 for MNIST. From experiments we realised that results from these dataset do not harm the generality of the analysis. Moreover, note that due to space limits we cannot include the results for all the combination of parameters and datasets considered during the performance evaluation. 
 
\subsection{Methodology}
\label{sub:methodology}
In order to validate and analyse our model, we use the following procedure. 
Each dataset is split into training and test sets, with a proportion $(80\%,20\%)$. 
In the following of the paper we consider data heterogeneity among devices both in terms of size and class representation (hereafter, both referred at once as \emph{heterogeneity}). This allows us to be compliant with the typical assumptions of federated learning, and to relax key simplifying assumptions under which our model has been derived, validating its robustness.
The procedure to simulate data collection in a way that preserves heterogeneity at both the initialisation of the systems and after the data aggregation on CPs is provided in Section \ref{sub:het}.

We set a number $m_0$ of devices, each one holding a certain number of observations of the training set. 
We select a target accuracy $\varepsilon$, and we impose, following~\cite{Lee:2015aa, Lee:2017aa}, that $\kappa$ is proportional to $\sqrt{N\cdot d}$, where $N=m_0\cdot n_0$ is the total amount of data held by the devices in the system. In order to make communication and computation costs comparable we assume that $\theta, \beta$, which in our model are respectively the generic cost for sending a feature and the generic cost of an operation for computing gradients, are expressed in energy units (Eu). We also assume that the two quantities have the following relationship: $\beta= \theta *\mu$, where $\mu\geq 0$.  In this way, with a single parameter $\mu$, we can control the contribution of the two terms of the cost function. Note that the optimal point identified by our model is insensitive to the specific values of $\beta$ and the $\theta$, but only depends on their ratio, i.e. $\mu$.

In the following analysis, we evaluate the optimal operating point, derived by solving our analytical model as explained in Section~\ref{sec:costanalysis}. We refer to this value as $\hat{\gamma}$. To validate our model, we proceed as follows. We run the distributed learning algorithm for all values of $\gamma$ in the range $[1,m_0]$. For each run, the algorithm stops when the estimate $\varepsilon_\gamma$ of the \eacc drops below the target value $\varepsilon$ (as in the definition of problem (\ref{eq:opt_problem})). In this paper the $\varepsilon_\gamma$ is estimated as the change of parameter values across iterations, measured via gradients. The intuition is that when the updates do not change significantly from one iteration to the next, the solution is close to the optimal one (assuming a convex loss function as we do in the paper). Specifically, since DSRVG is a gradient-based algorithm, the accuracy is estimated as the squared norm of the gradients, i.e.,
$$\varepsilon_{\gamma} = \|\nabla \mathcal{L}(\tilde{w},s)\|_2$$
where $\nabla\mathcal{L}(\bar{w},s)$ is the gradient of the model evaluated on point $s$.
 For each level of aggregation, we collect i) the number of communication rounds, ii) the number of FLOPS executed to reach the target $\varepsilon$-accuracy and iii) the amount of data collected on the CPs. 

All the simulations have been repeated ten times in order to reduce the variance of the results. Therefore, the results reported in this paper are average values for which we report also the confidence intervals at 95\% level of confidence.
These values are used as input to Equation~(\ref{eq:cost_model}) to calculate the empirical costs $C_A(\gamma),C_D(\gamma),C_P(\gamma)$ at the various aggregation levels. We define as \emph{empirical} optimal operating point the value of $\gamma$ for which we obtain minimum cost, hereafter referred to as $\gamma^*$. Notice that, for a given aggregation level $\gamma$, the analytical expression of the total cost and the one obtained via simulation may differ, as in the former case we are estimating both the number of rounds needed for convergence and the number of computations required. In turns, this means that the optimal operating point estimated by our analytical model is an approximation of the real optimal operating point found in simulations (which requires an exhaustive search over the $\gamma$ parameter range). Comparing the values of $\gamma^*$ and $\hat{\gamma}$, as well as the costs incurred at those aggregation levels, we are able to validate the accuracy of our solution. Specifically, we define \emph{overhead} (OH) the percentage additional cost incurred when using an aggregation level $\hat{\gamma}$ instead of $\gamma^*$.

\begin{table}[t]
\caption{Default model's parameters for all the considered datasets: Covtype (CT), Gisette (GS) and MNIST (MN) }
\begin{center}
\scriptsize
\begin{tabular}{lllllllll}
\toprule
  & N & $m_0$ & $n_0$ & d & $\varepsilon$ &  $\theta$ & $\mu$ & $\beta$\\
 \cmidrule{2-9}
 \addlinespace
CT & $\sim$32k & 400 & 112 & 54 &  &   & & \\
GS & 5.6k & 50  & 112 & 5000 & $10^{-3},\dots, 10^{-7}$ &  $1$ & $10^{-4}$ & $10^{-4}$\\
MN & $\sim$ 11k & 50 & 219 & 784 & &   &  & \\
\addlinespace
\bottomrule
\end{tabular} 
\end{center}
\label{tab:settings}
\end{table}

After validating the model, in Section~\ref{sub:comparison} we compare the cost achieved at the optimal operating point with respect to a fully centralised $\gamma=m_0$ and a fully decentralised $\gamma=1$ configurations. This allows us to highlight the advantage of configuring the distributed ML system through our model, as opposed to use two straightforward configurations. Finally, in Section~\ref{sub:sensitivity} we present a sensitivity analysis of the optimal configuration point and the related cost, by varying key parameters, as follows. Table~\ref{tab:settings} shows the default values for the main parameters used in the following analysis. The size of the datasets $N$ is derived from the classes that we selected. The number of devices $m_0$ generating data is used to represent a densely populated fog environment. The values of $\varepsilon$ are typical accuracy levels considered in the literature. As for the relative costs of computation vs. networking (i.e., $\mu$) we consider, by default, a case where computation is much cheaper ($10^{-4}$) than communications, under the assumption that a wireless transmission of data  is more power consuming than processing the same amount of data, e.g., computing a gradient. Note that in Section~\ref{sub:sensitivity} we analyse the impact of varying all of these parameters, considering also cases where computation is basically for free.

\subsection{Heterogeneous data collection and aggregation}
\label{sub:het}
To simulate a realistic, heterogeneous data collection process, each device draws, from a Multinomial$(N,p_1,\dots,p_{m_0})$ distribution, the number of data generated locally. In this way, for the $k$-th device, the size of its local dataset is, on average, proportional to $Np_k$ where $N$ is the size of the training set and $p_k$ the probability of assigning the $n$-th data point of the dataset to device $k$. To obtain class heterogeneity, each device picks and stores in the local cache a sample from the positive class with probability $p_+$  and a negative one with probability $1-p_+$ (remember that in our simulation we consider a binary classification task, thus samples are either of the positive class or of the negative class).  We randomly assign a different $p_+$ to each device, i.e., $p_+$ is sampled from a uniform distribution in $[0.15, 0.85]$ to guarantee at least $0.15$ probability to either class.

For each grouping $\gamma$, we then select the CPs to guarantee heterogeneity of data also after collection at CPs. First, we select CPs as the nodes with the most imbalanced class representation, where imbalance is computed as the entropy over the class distribution of the local dataset. Moreover, nodes that are not CPs move their data according to a \emph{per class preferential attachment} procedure. Precisely, a non-CP device makes distinct CP selections, one for each of the classes contained in its local dataset. The probability to select a CP for a given class depends on the number of samples contained in its local dataset w.r.t. the other CPs, i.e., the more samples of a given class are held by a CP, the higher is its probability of being selected. This approach to the definition of CPs and selection of CPs by other nodes is quite unlikely in reality. However, it guarantees large heterogeneity of the resulting distribution of data at CPs, stressing the validation of the analytical model. Cases characterised by lower heterogeneity of data at CPs show similar quantitative results, not shown here for space reasons. Note that, for space reasons, we focus on data heterogeneity only. However, results also indicate the robustness of the model to other types of heterogeneity, namely device heterogeneity. Our configurations result in large imbalance in data at Collection Points, which leads to heterogeneous network and computation costs. This is the same outcome of considering heterogeneous devices in terms of communication or computing characteristics. Figure\ref{fig:data_init} shows an example of heterogeneous data distribution among devices, before and after the collection process. 

\begin{figure}
    \centering
    \subfloat[Before aggregation on CPs]{\includegraphics[width=.5\columnwidth]{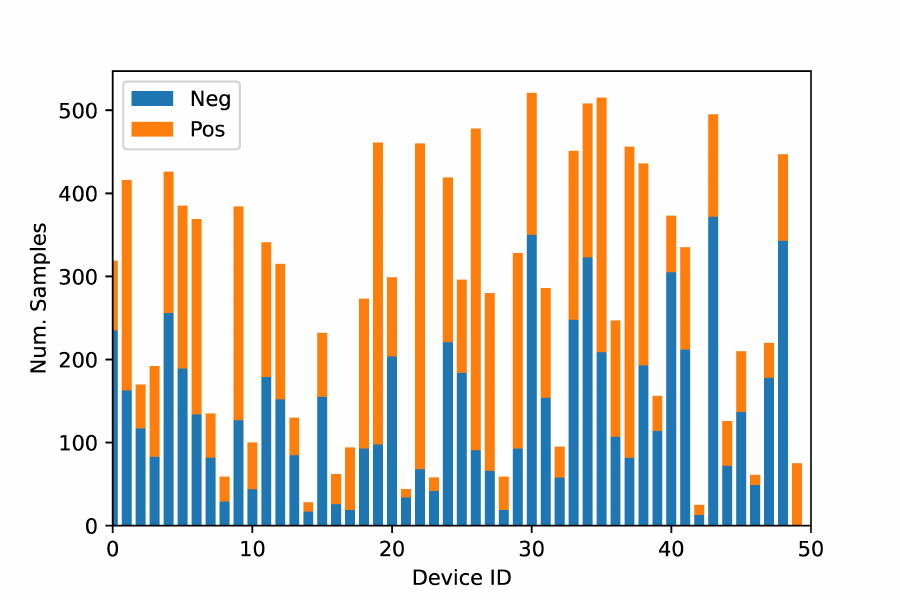}\label{fig:before}}
    \subfloat[After aggregation on CPs]{\includegraphics[width=.5\columnwidth]{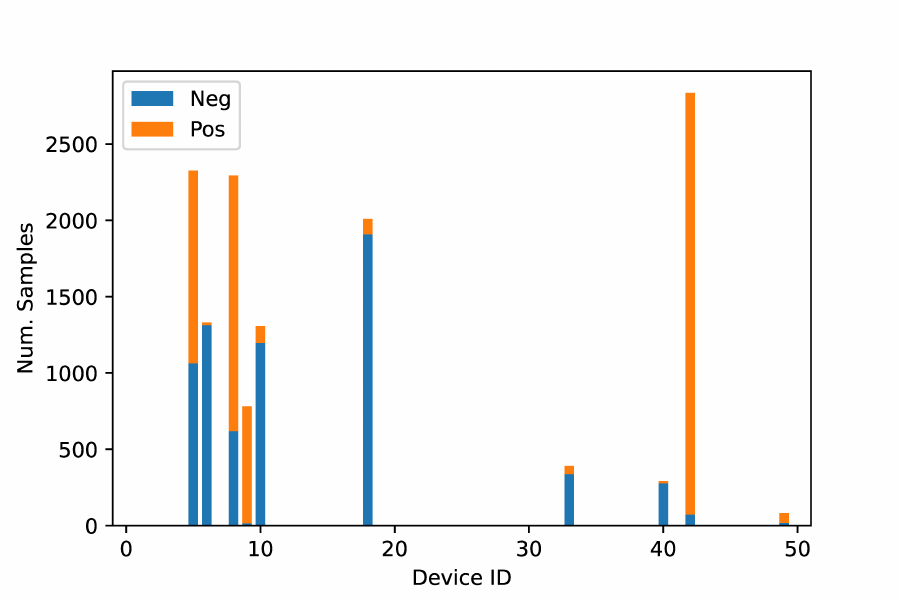}\label{fig:after}}
    \caption{Example of heterogeneous data distribution before (a) and after data aggregation on 10 CPs.}
    \label{fig:data_init}
\end{figure}

\section{Performance analysis}
\label{sec:perf}
In this section, we assess the accuracy of the model in estimating the optimal operating point across the entire range of possible aggregation levels (Section~\ref{sub:validation}) and we provide an extensive performance analysis to compare the cost of a system configured to operate at the optimal point estimated by our model, as compared to full centralisation and full decentralisation (Section \ref{sub:comparison}). Finally, in Section \ref{sub:sensitivity} we study the sensitivity of our model to its parameters. 
\begin{table}[t!]
\caption{Validation of the model with linear $f(\gamma)$ varying $\varepsilon$, for each dataset (DS).}
\begin{center}
\scriptsize
\begin{tabular}{lllllllll}
\toprule
 DS&$\varepsilon$ & $\gamma^*$ & $\hat\gamma$ & $R(\gamma^*)$ & $R(\hat\gamma)$ & $C(\gamma^*)$  & $C(\hat\gamma)$ & OH \\
 && & & & & (Eu$\cdot10^6$) & (Eu$\cdot10^6$) & (\%)\\
 \midrule
\multirow{5}{*}{CT}
& $10^{-7}$ & 6 & 7 & 34 & 31 & 1.91 & 1.93 & 1.10 \\
& $10^{-6}$ & 6 & 7 & 29 & 27 & 1.82 & 1.86 & 1.93 \\
& $10^{-5}$ & 6 & 6 & 24 & 24 & 1.76 & 1.76 & 0.00 \\
& $10^{-4}$ & 5 & 6 & 22 & 19 & 1.66 & 1.68 & 1.47 \\
& $10^{-3}$ & 1 & 5 & 24 & 10 & 1.05 & 1.51 & 44.66 \\
\addlinespace
\multirow{5}{*}{GS}
& $10^{-7}$ & 7 & 7 & 45 & 45 & 93.48 & 93.48 & 0.00 \\
& $10^{-6}$ & 6 & 7 & 45 & 39 & 74.75 & 76.01 & 1.69 \\
& $10^{-5}$ & 7 & 7 & 32 & 32 & 60.55 & 60.55 & 0.00 \\
& $10^{-4}$ & 6 & 7 & 30 & 26 & 47.51 & 48.33 & 1.73 \\
& $10^{-3}$ & 7 & 6 & 19 & 23 & 37.36 & 38.16 & 2.14 \\
\addlinespace
\multirow{5}{*}{MN}
&$10^{-7}$ & 3 & 3 & 58 & 58 & 12.06 & 12.06 & 0.00 \\
&$10^{-6}$ & 3 & 3 & 50 & 50 & 10.68 & 10.68 & 0.00 \\
&$10^{-5}$ & 3 & 3 & 42 & 42 & 9.27 & 9.27 & 0.00 \\
&$10^{-4}$ & 3 & 2 & 33 & 58 & 8.14 & 8.47 & 4.07 \\
&$10^{-3}$ & 2 & 2 & 43 & 43 & 7.09 & 7.09 & 0.00 \\
\bottomrule
\end{tabular} 
\end{center}
\label{tab:linear_valid}
\end{table}
\subsection{Model Validation}
\label{sub:validation}

\begin{figure}[t]
	\centering
	\subfloat[CT $\varepsilon=10^{-3}$]{\includegraphics[width=.50\columnwidth]{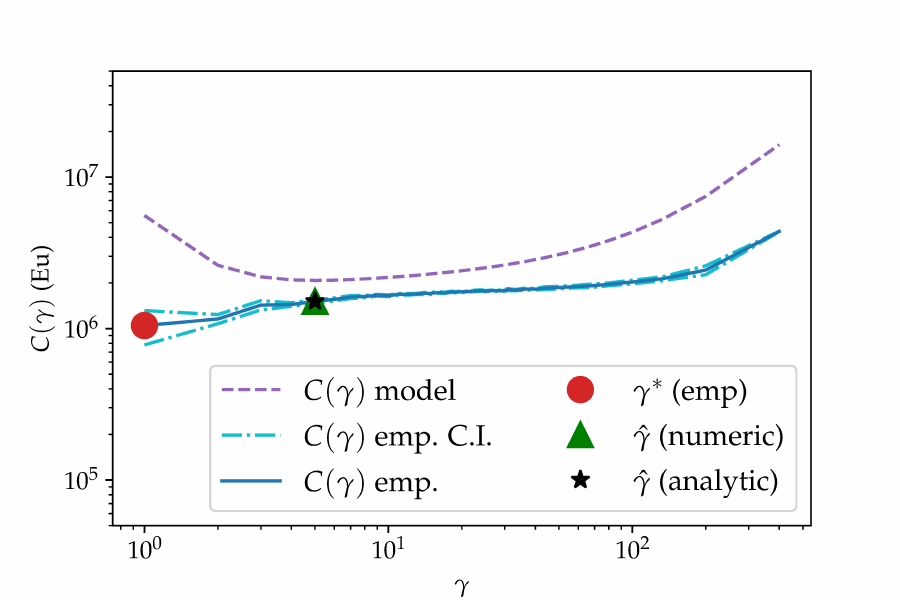}\label{fig:lin_bad}}
	\subfloat[CT $\varepsilon=10^{-7}$]{\includegraphics[width=.50\columnwidth]{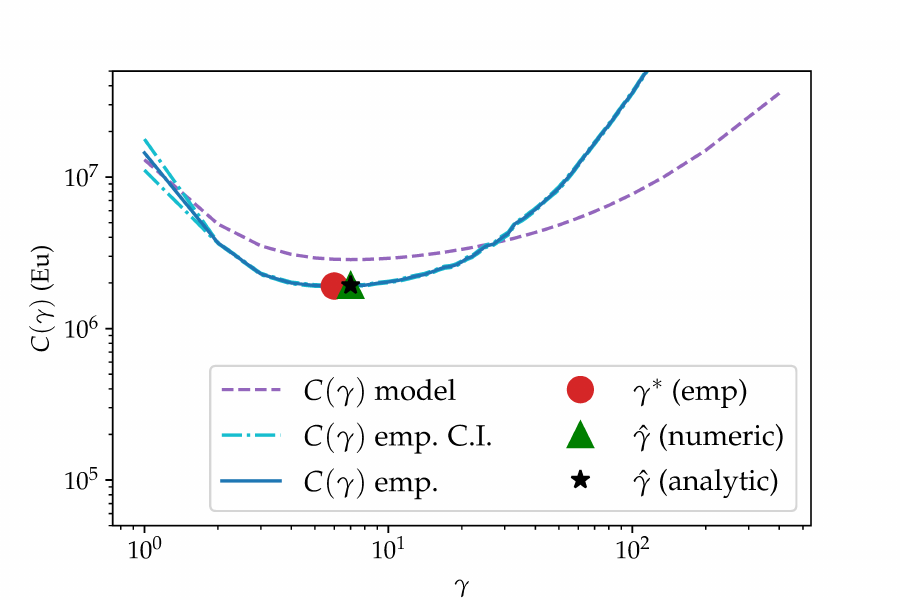}\label{fig:lin_good}}\\
	\subfloat[MN $\varepsilon=10^{-3}$]{\includegraphics[width=.50\columnwidth]{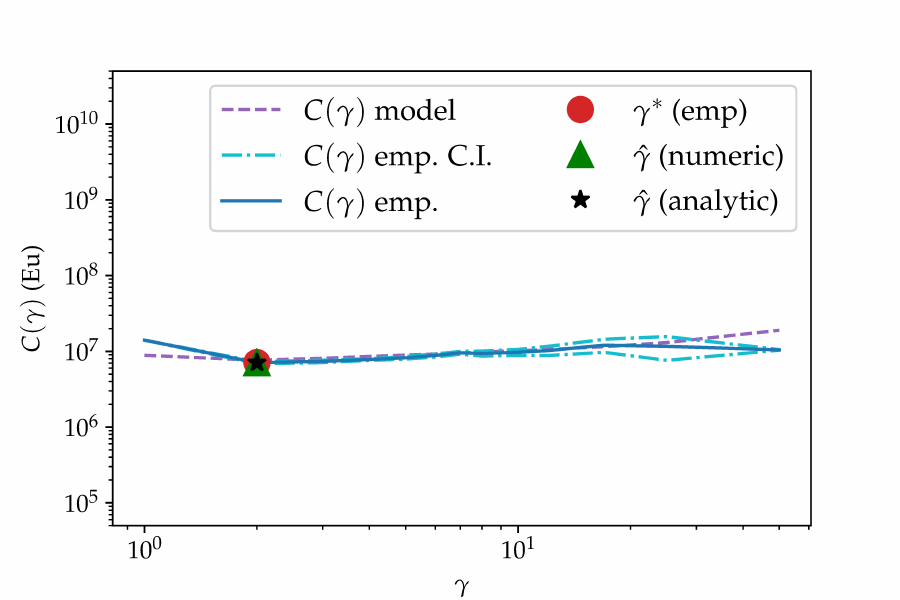}}
	\subfloat[MN $\varepsilon=10^{-7}$]{\includegraphics[width=.50\columnwidth]{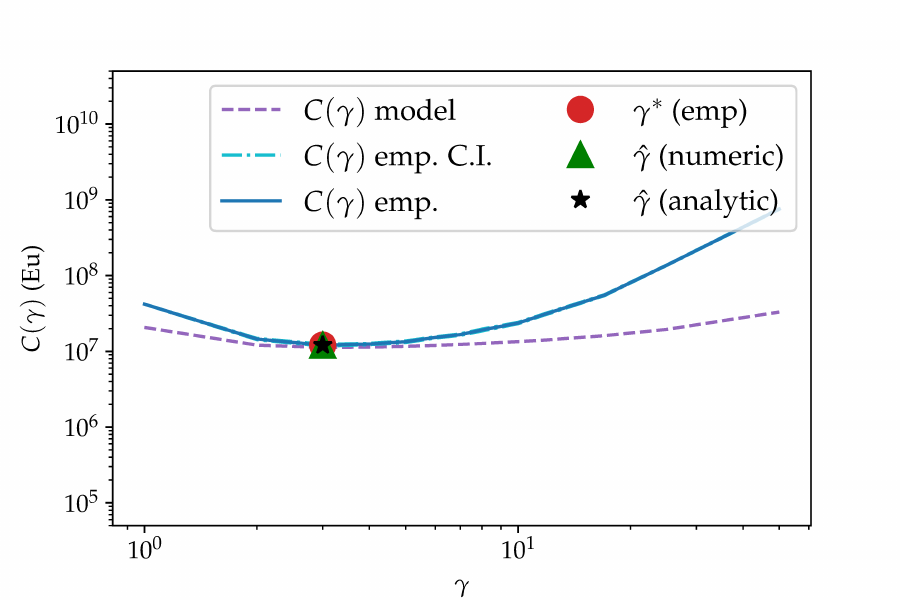}}
	\caption{Linear computational cost ($\alpha=1$) for Covtype (CT) and MNIST (MN) datasets and two target accuracies ($\varepsilon=10^{-3},10^{-7}$). Empirical (blue line) and model's predicted (dashed purple) cost $C(\gamma)$ at different aggregation level ($\gamma$) . Plot reports also the optimal empirical operational point (red dot) and the one identified by our model analytically (black star) and numerically (green triangle).}
	\label{fig:lin_curve}
\end{figure}    
In this section, we validate our model for three configurations of computational cost. We begin analysing the linear computational cost, corresponding to a baseline case where the compute cost is proportional to the number of data points handled by each collection point. In the following we refer to Table~\ref{tab:linear_valid} and Fig.~\ref{fig:lin_curve}. In the table we report, for each target accuracy and for each dataset, the empirical optimal aggregation point ($\gamma^*$) the one found by our model ($\hat\gamma$), the number of rounds actually needed to reach the target accuracy ($R(\gamma^*)$) and those induced by configuring the system at the estimated optimal operating point ($R(\hat\gamma)$). For each solution, we also report the optimal empirical cost ($C(\gamma^*)$) and the one corresponding to the estimated optimal aggregation ($C(\hat\gamma)$). Finally, we provide the percentage of additional cost due to the approximation of the model, i.e., the \emph{overhead} (OH). 
Fig.~\ref{fig:lin_curve} focuses on the linear computation costs ($\alpha=1$) and shows, for the extreme values of $\varepsilon$ in the range $[10^{-3},10^{-7}]$ and for a subset of datasets, the curves of the empirical cost function $C(\gamma)$ (solid blue line) and the one calculated by our model (dashed purple line), the optimal empirical aggregation point (red circle), the optimal operating point found by solving numerically our model (green triangle) and the one found though the closed-form solution (black star).

\begin{table}[ht!]
\caption{Validation of the model with sub-linear $f(\gamma)$ varying $\varepsilon$, for each dataset (DS).}
\begin{center}
\scriptsize
\begin{tabular}{lllllllll}
  \toprule
   DS&$\varepsilon$ & $\gamma^*$ & $\hat\gamma$ & $R(\gamma^*)$ & $R(\hat\gamma)$ & $C(\gamma^*)$  & $C(\hat\gamma)$ & OH \\
   && & & & & (Eu$\cdot10^6$) & (Eu$\cdot10^6$) & (\%)\\
   \midrule
  \multirow{5}{*}{CT}
  & $10^{-7}$ & 7 & 17 & 31 & 25 & 1.81 & 1.95 & 7.47 \\
  & $10^{-6}$ & 6 & 15 & 29 & 21 & 1.76 & 1.84 & 4.43 \\
  & $10^{-5}$ & 6 & 13 & 24 & 18 & 1.71 & 1.79 & 4.41 \\
  & $10^{-4}$ & 5 & 11 & 22 & 15 & 1.66 & 1.74 & 4.79 \\
  & $10^{-3}$ & 1 & 8 & 24 & 8 & 1.05 & 1.63 & 55.53 \\
  \addlinespace
  \multirow{5}{*}{GS}
  & $10^{-7}$ & 7 & 19 & 45 & 25 & 51.98 & 73.13 & 40.71 \\
  & $10^{-6}$ & 8 & 18 & 34 & 21 & 45.33 & 60.63 & 33.75 \\
  & $10^{-5}$ & 8 & 17 & 28 & 18 & 38.95 & 49.82 & 27.93 \\
  & $10^{-4}$ & 7 & 16 & 26 & 16 & 34.08 & 43.09 & 26.45 \\
  & $10^{-3}$ & 7 & 14 & 19 & 12 & 29.87 & 32.16 & 7.66 \\
  \addlinespace
  \multirow{5}{*}{MN}
  &$10^{-7}$ & 4 & 5 & 42 & 26 & 9.97 & 10.14 & 1.77\\
  &$10^{-6}$ & 3 & 4 & 50 & 22 & 9.13 &  9.16 & 0.40\\
  &$10^{-5}$ & 3 & 4 & 42 & 18 & 8.30 &  8.47 & 2.13\\
  &$10^{-4}$ & 3 & 3 & 33 & 33 & 7.49 &  7.49 & 0.00\\
  &$10^{-3}$ & 2 & 2 & 43 & 43 & 6.80 &  6.80 & 0.00\\
  \bottomrule
  \end{tabular}  
\end{center}
\label{tab:sublinear_valid}
\end{table}

In all datasets, the estimation of the aggregation level is equal to or greater than the optimal one ($\gamma^*$). Interestingly,  considering the linear computational cost, the overheads (OH) induced by the overestimation are limited, especially when the accuracy requested is medium/high, i.e., in all the datasets and for $\varepsilon\leq10^{-4}$ the overhead is up to 4.07\%. We justify this performance by observing in Fig.~\ref{fig:lin_curve} that, a part from a specific corner case (CT, $\varepsilon=10^{-3}$), our model's cost function closely approximates the empirical one especially nearby the optimal point. In those cases where our model overestimates the optimal solution, the empirical curve is flat enough such that the overhead remains contained. As anticipated, the only corner case regards the Covtype dataset for $\varepsilon=10^{-3}$ where our model's estimation incurs in high overhead (44.66\%). As revealed in Figs.~\ref{fig:lin_bad}-\ref{fig:lin_good}, the inaccuracy of our model is due to the fact that the empirical cost function in the first trait has two distinct regimes: for a lower accuracy (e.g., $\varepsilon\geq10^{-3}$) the curve is \emph{concave-down} while for a higher accuracy ($\varepsilon\leq10^{-7}$) it is \emph{concave-up}. The model switches from one regime to the other too soon, thus causing the imprecise approximation. 

This initial set of results (Fig.~\ref{fig:lin_curve} and Table~\ref{tab:linear_valid}) already shows that our model is quite accurate in estimating the optimal operating point. Interestingly, unless in the specific case of $\varepsilon=10^{-3}$, the error in estimating the optimal operating point leads to marginal additional costs (below 5\%). Moreover, the prediction accuracy improves with the targeted $\varepsilon$-accuracy, which is a very positive feature.

We present now the simulation results related to a computational cost that grows non-linearly with the amount of data to be processed on a device. Remember that sub-linear cases represent configurations where ``it is cheap to aggregate", such as edge infrastructure operators implementing cost policies to push moving data on their devices. Super-linear cases represent configurations where ``it is costly to aggregate", such as when collection points are resource constrained devices which risk saturation. In our model, super-linear and sub-linear cases correspond to values of $\alpha$ greater and lower than 1, respectively. In our simulations we set $\alpha=2$ and  $\alpha=0.5$ as representative of the super-linear and sub-linear cases, respectively. Results are shown in Table~\ref{tab:sublinear_valid} (sub-linear case), and Table~\ref{tab:superlinear_valid} (super-linear case).
\begin{table}[ht!]
\caption{Validation of the model with super-linear $f(\gamma)$ varying $\varepsilon$, for each dataset (DS).}
\begin{center}
\scriptsize
\begin{tabular}{lllllllll}
  \toprule
   DS&$\varepsilon$ & $\gamma^*$ & $\hat\gamma$ & $R(\gamma^*)$ & $R(\hat\gamma)$ & $C(\gamma^*)$  & $C(\hat\gamma)$ & OH \\
   && & & & & (Eu$\cdot10^6$) & (Eu$\cdot10^6$) & (\%)\\
   \midrule
  \multirow{5}{*}{CT}
  & $10^{-7}$ & 4 & 3 & 47 & 66 & 2.44 & 2.64 & 7.88 \\
  & $10^{-6}$ & 4 & 3 & 41 & 56 & 2.22 & 2.36 & 6.44 \\
  & $10^{-5}$ & 4 & 3 & 34 & 47 & 2.02 & 2.13 & 5.11 \\
  & $10^{-4}$ & 4 & 3 & 27 & 44 & 1.85 & 1.94 & 5.34 \\
  & $10^{-3}$ & 1 & 3 & 24 & 19 & 1.05 & 1.49 & 42.09\\
 \addlinespace 
  \multirow{5}{*}{GS}
  & $10^{-7}$ & 4 & 3 & 90 & 149 & 333.11 & 351.79 & 5.61 \\
  & $10^{-6}$ & 4 & 3 & 76 & 127 & 251.73 & 268.14 & 6.52 \\
  & $10^{-5}$ & 4 & 3 & 65 & 106 & 185.89 & 194.85 & 4.82 \\
  & $10^{-4}$ & 4 & 3 & 51 &  86 & 126.12 & 134.91 & 6.97 \\
  & $10^{-3}$ & 4 & 3 & 38 &  63 &  81.94 &  85.84 & 4.76 \\
  \addlinespace
  \multirow{5}{*}{MN}
  &$10^{-7}$ & 2 & 2 & 101 & 101 & 21.00 & 21.00 & 0.00 \\
  &$10^{-6}$ & 2 & 2 &  86 &  86 & 17.09 & 17.09 & 0.00 \\
  &$10^{-5}$ & 2 & 2 &  72 &  72 & 13.75 & 13.75 & 0.00 \\
  &$10^{-4}$ & 2 & 2 & 58 &  58 & 10.85 & 10.85 & 0.00 \\
  &$10^{-3}$ & 2 & 1 &  43 & 143 & 8.32 & 14.05 & 68.87 \\
  \bottomrule
  \end{tabular}
\end{center}
\label{tab:superlinear_valid}
\end{table}

We focus first on the case when the computational cost is sub-linear.
Results related to the CovType(CT) and MNIST (MN) dataset confirm the same behaviour already observed in the linear case. The case of Gisette (GS) is instead quite the opposite. The model predictions are not very accurate in this case, which leads to a significant overhead. In addition, the overhead in this case is higher for higher accuracy. We remark that this is the only case among the ones we tested where the model is not able to provide very good approximation levels (unless for the corner cases in CT already discussed). This is a side effect of the specificity of the dataset, as well as the simplifying assumptions required for the model. We believe, however, that the high accuracy shown in the rest of the cases (also with the same dataset) confirms the validity of our approach. We anticipate that, as we show in Section~\ref{sub:comparison}, in spite of the accuracy degradation observed in this case, the operational points identified by our model allows to configure the system in a way to save significant resources (up to 93.64\%) w.r.t. the cost connected to naive configurations such as full decentralisation or centralisation.  

For the super-linear case, at each $\varepsilon$, the optimal operating point is at lower aggregation levels with respect to the linear and sub-linear cases. This is expected, because the steep growth of the computational cost pushes the optimal aggregation point towards more decentralised solutions (lower values of $\gamma$). From the system point of view, this means that the distributed learning process performs more communication rounds. This is confirmed by the number of rounds performed at, for example, $\varepsilon=10^{-6},10^{-7}$ (see columns $R(\gamma^*),R(\hat\gamma)$ of Table~\ref{tab:superlinear_valid} compared to the corresponding ones in Table~\ref{tab:sublinear_valid}). 
In terms of accuracy of the model we observe that, for all datasets,  at lower accuracy ($\varepsilon=10^{-3}$) the model estimation error results in the higher overhead (as in the previous cases) while for the all the remaining cases the additional cost is quite limited, i.e., are up to 8\%.

Summarising this set of results, we can observe some common features across the three cases, which highlights a coherent behaviour of our model, unless for the only one case of GS with sub-linear computation costs. First, the model tends to be more accurate in estimating the costs as the required accuracy increases. Moreover, in many cases the model is able to estimate the exact optimal operating point, i.e., the best level of aggregation that drives the system to achieve minimal cost. Finally, even when the optimal operating point is only approximated, the incurred overhead is most of the time quite limited, with only a very few exceptions. We can thus conclude that the model is an accurate tool to both (i) analyse the cost incurred at the different aggregation levels, and (ii) identify optimal operation points of the decentralised learning task.

Finally, to assess the quality of the final training model trained by the distributed learning we  report in Fig.~\ref{fig:accuracy} its final accuracy at each value of $\gamma$, for all the considered datasets. For the Covtype and MNIST datasets the inference accuracy on the test set is almost constant for the aggregation level. This means that, aggregating data at different levels does not or only marginally affect the accuracy of the trained model. For the case of Gisette, the difference between full decentralisation and full decentralisation is below 5\%. This is the maximum lack of accuracy we obtain by using any decentralised configuration, including the one indicated by our model.

\begin{figure}[ht]
	\centering
	\subfloat[CT]{\includegraphics[width=.50\columnwidth]{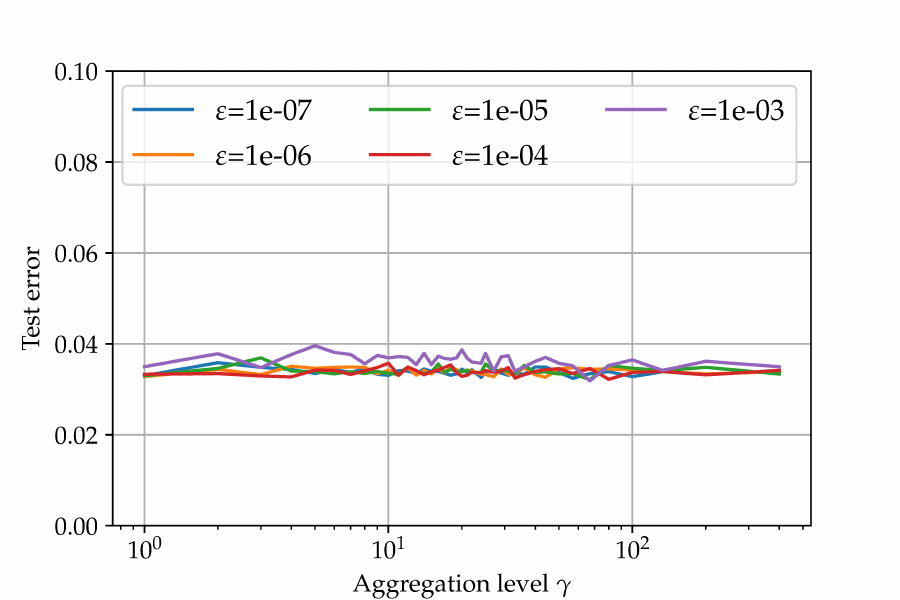}}\\
	\subfloat[GS]{\includegraphics[width=.50\columnwidth]{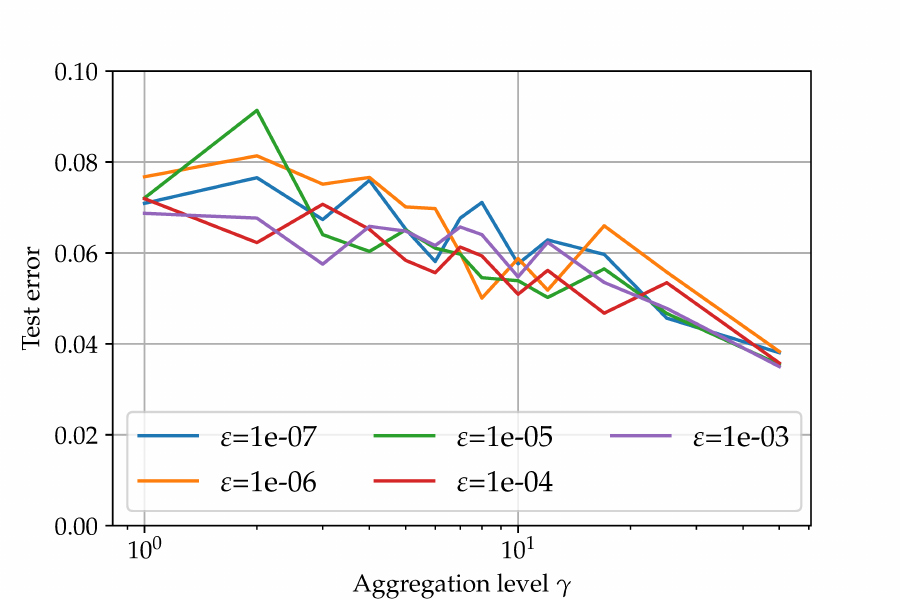}}
	\subfloat[MN]{\includegraphics[width=.50\columnwidth]{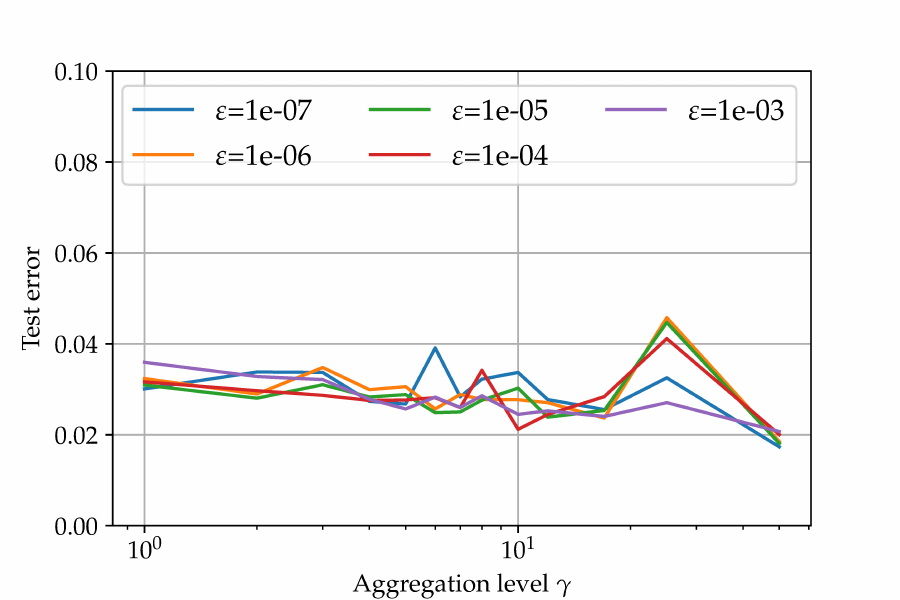}}
	\caption{Test error on all the datasets, for different target accuracy, at each aggregation level $\gamma$.}
	\label{fig:accuracy}
\end{figure}

\subsection{Comparison of optimal, fully centralised and fully decentralised operating points}
\label{sub:comparison}
An important feature to be noted is that, in most of the analysed cases, the optimal operating point is an \emph{intermediate} aggregation level, i.e., the optimal configuration is neither to fully centralise ($\gamma=m_0$), neither to fully decentralise ($\gamma=1)$ the learning task. It is thus interesting to quantify the advantage of using our model and configure the system at the predicted optimal operating point, with respect to applying simpler, but less accurate, policies corresponding to the two extreme aggregation configurations. For this analysis we refer to Table~\ref{tab:all_costs}, where we report the results for the linear ($\alpha=1$), sub-linear ($\alpha=0.5$) and super-linear ($\alpha=2$) cases of the Covtype dataset, respectively. Due to space reasons we limit the discussion to this case because it is representative also for the other two. For the sake of completeness, we report in Tables~\ref{tab:centr_gain},~\ref{tab:dec_gain} the gain/loss (\%) deriving by the usage of the model's intermediate solution in place of the fully distributed and fully centralised ones, for all the datases and considered cases.
\begin{table}[ht!]
\centering
\caption{Costs at the predicted optimal operating point compared to those of full decentralisation and centralisation. Covtype dataset (CT) }
\label{tab:all_costs}

\begin{center}
\scriptsize
\begin{tabular}{@{}cccccc@{}}
\toprule
$\alpha$ & $\varepsilon$ & $\widehat{\gamma}$ & $C(1)$& $C(\widehat{\gamma})$ &  $C(400)$\\ 
& & & (Eu$\cdot 10^6$) & (Eu$\cdot 10^6$) & (Eu$\cdot 10^6$) \\ 
\midrule
\multirow{6}{*}{$0.5$} 
& $10^{-7}$ & 17 & 14.44 & 1.95 & 63.79  \\
& $10^{-6}$ & 15 & 14.53 & 1.84 & 48.54  \\
& $10^{-5}$ & 13 & 5.85 & 1.79 & 16.26  \\
& $10^{-4}$ & 11 & 2.82 & 1.74 & 2.00  \\
& $10^{-3}$ & 8 & 1.05 & 1.63 & 1.93  \\
\addlinespace
\multirow{6}{*}{$1$} 
& $10^{-7}$ & 7 & 14.44 & 1.93 & 1241.61  \\
& $10^{-6}$ & 7 & 14.53 & 1.86 &  936.48  \\
& $10^{-5}$ & 6 &  5.85 & 1.76 &  291.04  \\
& $10^{-4}$ & 6 &  2.82 & 1.68 &    5.74  \\
& $10^{-3}$ & 5 &  1.05 & 1.51 & 4.37  \\
\addlinespace
\multirow{6}{*}{$2$} 
& $10^{-7}$ & 3 & 14.44 & 2.64 & 495930.70  \\
& $10^{-6}$ & 3 & 14.53 & 2.36 & 373880.88  \\
& $10^{-5}$ & 3 & 5.85 & 2.13 & 115700.96  \\
& $10^{-4}$ & 3 & 2.82 & 1.94 & 1585.89  \\
& $10^{-3}$ & 3 & 1.05 & 1.49 & 1035.64  \\
\bottomrule
\end{tabular}
\end{center}
\end{table}
\begin{table}[ht!]
  \centering
  \caption{Gain/loss obtained from comparing the cost at the predicted optimal operating point compared to those of full centralisation, for each value of $\varepsilon$ and each dataset (DS) }
  \label{tab:centr_gain}
  \begin{center}
  \scriptsize
  \begin{tabular}{lllllll}
  \toprule
  DS & $\alpha$ & \multicolumn{5}{c}{Gain/loss(\%)}\\
  \cmidrule{3-7}
    & & $\varepsilon=10^{-7}$ & $\varepsilon=10^{-6}$ & $\varepsilon=10^{-5}$ &  $\varepsilon=10^{-4}$ & $\varepsilon=10^{-3}$  \\
  \midrule
  \multirow{3}{*}{CT} 
  & $0.5$ & 96.95 & 96.20 & 89.00 & 13.08 & 15.67 \\  
  & $1$ & 99.84 & 99.80 & 99.40 & 70.70 & 65.39 \\
  & $2$ & 100.00 & 100.00 & 100.00 & 99.88 & 99.86\\
\addlinespace
\multirow{3}{*}{GS} 
  & $0.5$ & 81.72 & 78.70 & 76.78 & 72.35 & 65.47\\  
  & $1$ & 96.49 & 95.88 & 95.52 & 94.83 & 92.24 \\
  & $2$ & 99.73 & 99.71 & 99.71 & 99.70 & 99.63 \\
  \addlinespace
  \multirow{3}{*}{MN} 
  & $0.5$ & 91.10 & 89.65 & 87.27 & 82.42 & 22.44 \\  
  & $1$ & 98.40 & 98.14 & 97.79 & 96.61 & 32.17 \\
  & $2$ & 99.94 & 99.94 & 99.93 & 99.91 & 86.28\\
  \bottomrule
  \end{tabular}
  \end{center}
  \end{table}

  \begin{table}[ht!]
    \centering
    \caption{Gain/loss obtained from comparing the cost at the predicted optimal operating point compared to those of full decentralisation, for each value of $\varepsilon$ and each dataset (DS)}
    \label{tab:dec_gain}
    
    \begin{center}
    \scriptsize
    \begin{tabular}{lllllll}
    \toprule
    DS & $\alpha$ & \multicolumn{5}{c}{Gain/loss(\%)}\\
    \cmidrule{3-7}
    & & $\varepsilon=10^{-7}$ & $\varepsilon=10^{-6}$ & $\varepsilon=10^{-5}$ &  $\varepsilon=10^{-4}$ & $\varepsilon=10^{-3}$  \\
    \midrule
    \multirow{3}{*}{CT} 
    & $0.5$ &  86.52 & 87.32 & 69.45 & 38.45 & -55.53 \\  
    & $1$ & 86.63 & 87.21 & 70.02 & 40.42 & -44.66\\
    & $2$ & 81.74 & 83.73 & 63.67 & 31.08 & -42.09\\
  \addlinespace
  \multirow{3}{*}{GS} 
    & $0.5$ & 93.64 & 93.35 & 92.86 & 91.55 & 90.70 \\  
    & $1$ & 91.87 & 91.66 & 91.32 & 90.52 & 88.97 \\
    & $2$ & 69.39 & 70.59 & 72.08 & 73.54 & 75.19 \\
    \addlinespace
    \multirow{3}{*}{MN} 
    & $0.5$ & 75.84 & 73.14 & 68.27 & 62.53 & 51.63  \\  
    & $1$ &71.28 & 68.71 & 65.28 & 57.62 & 49.52 \\
    & $2$ & 49.98 & 49.90 & 48.52 & 45.68 & 0.00 \\
    \bottomrule
    \end{tabular}
    \end{center}
    \end{table}

Table~\ref{tab:all_costs} shows that the cost at the operating point indicated by our model (column $C(\widehat\gamma)$) is far lower than both the one of a fully centralised and fully decentralised system (columns $C(400)$ and $C(1)$, respectively). Moreover, it sheds light on the reasons why moving from less to more expensive computation pushes the optimal aggregation point towards higher decentralisation. Precisely, looking at column ($C(400)$) of the table, the costs connected to the full centralisation increase up to 4 order of magnitude (e.g. $\varepsilon=10^{-6},10^{-7}$) when we move form the sub-linear to the super-linear regimes. Consequently, the optimal operating point reduces accordingly, doubling the optimal number of collection points in the super-linear regime for the same level of accuracy (i.e., $\varepsilon=10^{-7}$). 

The advantage of using the intermediate  solution instead of centralising all the data on one device is confirmed looking at the gain figures in Table~\ref{tab:centr_gain}.
In the sub-linear computational cost case, we can save form $13\%$ to $97\%$ Eu, according to the target accuracy of interest and the specific dataset. The highest gain corresponds to the highest accuracy ($\varepsilon=10^{-7}$), since the empirical cost function tends to be quadratic and, thus, requiring a higher accuracy acts as a scale factor pushing its extremes (both $\gamma=1$ and $\gamma=m_0$) further and further from the optimal intermediate point.  From the system point of view, this means that as we aggregate on fewer collection points, the number of data points per CP increases along with the local computation which adds up significantly over the network cost. 
Using ``more aggressive'' cost functions (linear and super-linear) magnifies such a behaviour. The costs for the full centralisation increases dramatically and, thus, configuring the ML al indicated by our model results in gains ranging from 32\% up to 100\%.

In comparison with full decentralisation the gain obtainable by using the aggregation level provided by our model ranges from $0\%$ to $93.64\%$ (apart from the corner cases of low accuracy in the CovType dataset already discussed). Overall, looking at Table~\ref{tab:dec_gain} we see that  the different  regimes of computational cost follow the same decreasing trend observed in Table\ref{tab:centr_gain}, i.e., the highest accuracy the highest the gain.  This is due the fact that the computational costs increases with the accuracy required, contributing more to the overall cost. Since full decentralisation is the configuration in which devices have the least quantity of local data, this configuration forces the algorithm to perform more communication rounds (and therefore also more local computation) which might be avoided with an intermediate level of data aggregation. Finally, note that using our model results in a gain also in the Gisette dataset where the model provides a poor approximation of the real optimal operating point. This shows that even in such case, our approach is significantly better that naive solutions such as full decentralisation or centralisation.

\subsection{Sensitivity analysis}
\label{sub:sensitivity}
We finally present a sensitivity analysis of the model with respect to key parameters. Note that, for this analysis we adopt the same configuration used during the validation, with the aim of making the results easier to interpret.   
By default, we set the parameters as already indicated in Table~\ref{tab:settings}. We fix the target accuracy $\varepsilon=10^{-5}$ and we consider that the input data have dimensionality $d=54$ as in the Covtype dataset. The unitary cost for networking is $\theta=1 Eu$, and we set the $\mu$ parameter to $10^{-4}$. Therefore, the computational unitary cost is $\beta=10^{-4}$ Eu. We show the results for all the considered computational cost regimes: sub-linear, linear and super-linear which correspond to $\alpha=0.5, \alpha=1, \alpha=2$, respectively. Unless differently specified, we assume to have $m_0=400$ devices, each one holding $n_0=112$ data points.

First, we analyse the sensitivity of our model to increasingly higher amounts of data generated at each device ($n_0$). Results in Table~\ref{tab:n_var} show that, for all the regimes of $\alpha$, our model tends to aggregate the data on fewer collection points when at each device the amount of data is smaller. Therefore, the optimal number of collection points is directly proportional to the initial quantity of data per device. The reason behind this behaviour is that, when $n_0$ increases, the local models can be trained with more data, and thus are more accurate. Therefore, fewer rounds are required, and thus the networking costs at a given aggregation level decreases. This pushes towards more decentralised optimal operating points. As expected, for each value of $n_0$ the optimal operating point moves towards more collection points as the computational cost grows faster with $\gamma$. 

\begin{table}
\centering
\caption{Optimal operational points varying the amount of data generated per device.}
\label{tab:n_var}
\scriptsize
\begin{tabular}{cccc}
\toprule
$n_0$ & $\widehat{\gamma}_{\alpha=0.5}$  & $\widehat{\gamma}_{\alpha=1}$  &  $\widehat{\gamma}_{\alpha=2}$  \\
& (sub-linear) & (linear) & (super-linear)\\
\midrule
$50$ 	& $11$ 	& $6$ 	& $3$ \\
$75$ 	& $6$ 	& $4$ 	& $2$ \\
$100$ 	& $3$ 	& $3$ 	& $2$ \\
$125$ 	& $2$ 	& $2$ 	& $1$ \\
$200$ 	& $1$ 	& $1$ 	& $1$ \\
$500$ 	& $1$ 	& $1$ 	& $1$ \\
$1000$ 	& $1$ 	& $1$ 	& $1$ \\
\bottomrule
\end{tabular}
\end{table}
\begin{table}[ht]
\centering
\caption{Optimal operational points varying the initial amount of data per device.}
\label{tab:m_var}
\scriptsize
\begin{tabular}{cccc}
\toprule
$m_0$ & $\widehat{\gamma}_{\alpha=0.5}$  & $\widehat{\gamma}_{\alpha=1}$  &  $\widehat{\gamma}_{\alpha=2}$  \\
& (sub-linear) & (linear) & (super-linear)\\
\midrule
$50$ 	& $1$ 	& $1$ 	& $1$ \\
$75$ 	& $1$ 	& $1$ 	& $1$ \\
$100$ 	& $1$ 	& $1$ 	& $1$ \\
$125$ 	& $2$ 	& $1$ 	& $1$ \\
$200$ 	& $2$ 	& $2$ 	& $1$ \\
$500$ 	& $3$ 	& $3$ 	& $2$ \\
$1000$ 	& $5$ 	& $3$ 	& $2$ \\
$2000$ 	& $7$ 	& $4$ 	& $2$ \\
$5000$ 	& $12$ 	& $6$ 	& $2$ \\
\bottomrule
\end{tabular}
\end{table}
We then analyse the effect of the number of devices, $m_0$, by keeping the data they individually generate ($n_0$) fixed. As it was the case when increasing $n_0$, also by increasing $m_0$ we are increasing the total amount of data in the system. However, looking at results in Table~\ref{tab:m_var} we notice that in the latter case, the optimal behaviour is to aggregate \emph{(slightly) more} (rather than less) as the total amount of data increases. 

The reason is that, when increasing the number of nodes at a given aggregation level, we are keeping constant the number of nodes that contribute their data to an individual Collection Point (remember that $\gamma$ has exactly this physical meaning), and we are increasing the number of Collection Points. Therefore, the amount of data over which Collection Points compute local models are the same, which means the accuracy of the local models are similar. However, with more Collection Points the network traffic per round of model update increases. Therefore, it is more efficient to aggregate data on fewer Collection Points, paying the initial cost of moving more raw data, but saving on the network traffic during collective training.

This is an interesting result as it shows that variations in the total amount of data in the system may generate opposite variations in terms of optimal operating point, depending on whether this is a side effect of a higher amount of data generated at each individual device, or a higher number of devices collaborating in the system.

Finally we analyse the impact of the relationship between the communication and computing costs. To this end we vary the $\mu$ parameter, which controls the ratio between $\theta$ and $\beta$, in the range $[10^{-5},10]$. In this way we are considering both scenarios in which the computational cost per device is negligible (low values of $\mu$) and others where they exceed the communication costs (values of $\mu$ higher than 1). We jointly analyse the cases where the amount of data generated at each device increases. We perform five blocks of simulations. For each block we keep fixed the number of devices ($m_0=400$) and we increase the amount of data they individually generate ($n_0=50,100,200$). For each pair $(m_0,n_0)$ we vary the parameter $\mu$. Note that, considering cases where the computation cost is negligible allows us to also include in the analysis cases where computation is offered ``for free". Indeed, in such cases we found that the computation component of the cost function is orders of magnitude lower than the network component. Even in such cases, our results show that aggregating all data in a unique centralisation point is not the optimal choice. Results are reported in Table~\ref{tab:mu_var}.
\begin{table}[ht!]
\centering
\caption{Optimal operational points varying the amount of data per device and the ratio between the communication and the computation cost.}
\label{tab:mu_var}
\scriptsize
\begin{tabular}{llccc}
\toprule
$n_0$ & $\mu$ & $\widehat{\gamma}_{\alpha=0.5}$ & $\widehat{\gamma}_{\alpha=1}$  &  $\widehat{\gamma}_{\alpha=2}$ \\
& & (sub-linear) & (linear) & (super-linear)\\
\midrule
\multirow{7}{*}{$50$}  & $10$ & $7$ & $1$ & $1$ \\
  & $1$ & $7$ & $1$ & $1$ \\
  & $10^{-1}$ & $7$ & $1$ & $1$ \\
  & $10^{-2}$ & $7$ & $2$ & $1$ \\
  & $10^{-3}$ & $8$ & $3$ & $2$ \\
  & $10^{-4}$ & $11$ & $6$ & $3$ \\
  & $10^{-5}$ & $13$ & $10$ & $5$ \\
  & $10^{-6}$ & $13$ & $13$ & $9$ \\
\addlinespace
\multirow{7}{*}{$100$}  & $10$ & $5$ & $1$ & $1$ \\
  & $1$ & $5$ & $1$ & $1$ \\
  & $10^{-1}$ & $5$ & $1$ & $1$ \\
  & $10^{-2}$ & $5$ & $1$ & $1$ \\
  & $10^{-3}$ & $4$ & $2$ & $1$ \\
  & $10^{-4}$ & $3$ & $3$ & $2$ \\
  & $10^{-5}$ & $3$ & $3$ & $3$ \\
  & $10^{-6}$ & $3$ & $3$ & $3$ \\
\addlinespace
\multirow{7}{*}{$200$}  & $10$ & $4$ & $1$ & $1$ \\
  & $1$ & $4$ & $1$ & $1$ \\
  & $10^{-1}$ & $4$ & $1$ & $1$ \\
  & $10^{-2}$ & $3$ & $1$ & $1$ \\
  & $10^{-3}$ & $1$ & $1$ & $1$ \\
  & $10^{-4}$ & $1$ & $1$ & $1$ \\
  & $10^{-5}$ & $1$ & $1$ & $1$ \\
  & $10^{-6}$ & $1$ & $1$ & $1$ \\
\bottomrule
\end{tabular}
\end{table}	
In most cases, when the computational cost has a negligible impact with respect to the communication cost, the optimal solution is to increase the level of aggregation.
This is quite interesting, and somewhat counter-intuitive, as increasing aggregation means spending more on moving data initially collected to fewer collection points, which requires using a ``costly" resource, i.e., communication. However, aggregating data then produces more precise local models at a small increase of cost, due to the low cost of computation, also reducing the number of rounds to achieve the target accuracy, thus eventually saving on the ``costly" resource. For example, note the case in the last row of the first block the optimal point reduces by a factor between 13 and 9 the number of collection points with respect to the set of nodes generating data. However, it is also interesting to note that neither in this case full aggregation (i.e., $\gamma=m_0$) is the best choice.

Conversely, as the computation and communication costs become similar, it is more efficient to move less the data, ``pay" less at collection points in terms of computation (remember that, due to our computation cost function model, computing on the same amount of data is more costly on a single node than separately on two nodes), even though this requires additional communication cost due to a higher number of rounds to achieve the required accuracy. 

An interesting behaviour is shown at the last three blocks of the sub-linear regime ($\hat\gamma_{\alpha=0.5}$) where within the block the optimal aggregation \emph{decreases} as the cost for computing becomes more and more negligible, which is the opposite of what we have just observed. This is actually a joint effect of the interplay between the two components of the cost, and the high numerosity of data available at each node ($n_0$). On the one hand, as observed before, when the computation cost becomes negligible compared to the communication cost, aggregating more becomes convenient, as the fewer collection points can work with a higher number of data, thus local models are more precise and lower communication rounds are needed to achieve a target accuracy. This is indeed what happens at low values of $n_0$. However, aggregating more means moving more data from original devices to the (fewer) collection points. This \emph{increases} the networking cost. At high values of $n_0$, i.e., when a large amount of data is already available at individual devices, this additional cost is not balanced by a reduced networking cost due to fewer rounds of communication required to converge. In other words, low levels of aggregation result in precise enough local models not to require a large number of communication rounds, which pushes towards lower aggregation levels. However, increased computation costs might change the picture. When computation becomes costly with respect to communication (high values of $\mu$) the overall computation cost becomes predominant over communication costs. Therefore, it pays of to aggregate more, as this allows for fewer rounds to refine the models, and thus to a lower total computation cost.

Summarising from the obtained results, we can draw a few general remarks. First, ``decentralisation helps". In all examined cases, the optimal operating point is far from centralising all data in a single collection point. This tells that decentralised machine learning, in addition to advantages in terms of privacy, is also beneficial in terms of efficiency, without compromising on model accuracy. Second, we have also shown that the \emph{shape} of the computation cost function with respect to the aggregation level has a key role in moving the optimal point between higher and lower aggregation. This might give interesting indications (and a useful tool) to  configure ``computational offers" by distributed infrastructure providers, which might tune, e.g., the cost of their VM resources also based on whether they prefer to work with higher or lower levels of aggregation. Third, we have shown that the relative cost of communication vs computation might also play a big role in setting the optimal operating point. This might lead to apparently counter-intuitive results: high communication costs might result in \emph{few} collection points aggregating more data (which have to be moved from the originating nodes), if this allows to obtain significantly more precise local models, or \emph{more} collection points if the amount of data generated at individual devices is large enough to obtain sufficiently accurate local models. High computation costs might result in \emph{few} collection points (where computation is more expensive) if the amount of data at individual nodes it too little to sustain precise enough local models.
\section{Conclusion}
\label{sec:conclusions}
In this paper we consider a scenario in which edge devices (IoT, personal mobile devices) accomplish a data analytics process on some generated or collected data. This is an increasingly relevant scenario in many use cases, as it is becoming more and more clear that distributed data analytics (machine learning, in our specific case) may bring significant advantages with respect to conventional centralised approaches. For example, in Industry 4.0 contexts this addresses data confidentiality and real-time operational constraints. However, in principle centralising data analytics is considered more effective, as machine learning models can be trained with more data. In this paper we addressed the research question whether this is actually the case, when resource consumption is also considered. Specifically, decentralised machine learning algorithms allow to collaboratively train models without sharing data, via successive rounds of communications through which local models are exchanged and fine tuned. Any target accuracy can be obtained, at the cost of increasing the number of refinement rounds. As each round has an associated cost in terms of communication traffic and computation, in this paper we investigate what is the optimal  level at which to aggregate raw data (i.e., the optimal number of nodes where to collect subsets of the raw data) such that the data analytics process achieves a target accuracy, while minimising the overall communication and computation resources.

To this end we propose an analytical framework able to cope with this problem. 
We exemplify the complete derivation of the optimal operating point in the case of one of the reference distributed learning algorithms, named DSVRG, which represents a large number of analytics tasks (e.g., classification and regression). Exploiting analytical expressions of the number of data exchanged between the nodes to run the decentralised algorithm, we derive a closed form expression for the computational and network resources required to achieve a target accuracy. 
We thus solve the model to obtain the optimal operating point, and study the properties of the cost function in the range of the possible levels of aggregation, ranging from complete decentralisation to complete centralisation. The analysis highlights that in most of the cases the optimal operating point is neither of these two extremes aggregation points. Rather, the typical optimal solution consists in aggregating raw data in an intermediate number of collection points, which run the decentralised machine learning algorithm in a collaborative fashion. The paper presents closed form expressions for the optimal operating point in significant cases, validates the accuracy of the analytical model, quantifies the advantage of configuring the system at the optimal operating point as opposed to the two extreme cases, and finally presents a sensitivity analysis showing the impact of various parameters on the optimal configuration. Moreover, we highlight that our analytical model not only provides a performance evaluation tool, but can also be used as a design too, to find (analytically or numerically) the optimal configuration of a decentralised machine learning system.


\section*{Acknowledgments}
This work is partially supported by two projects: HumanE AI Network (EU H2020 HumanAI-Net, GA \#952026), SoBigData++ (EU H2020 SoBigData++, GA \#871042) and Operational Knowledge from Insights and Analytics on Industrial Data (MIUR PON OK-INSAID, ARS01\_00917)
%
%
%
%
%
%
\bibliographystyle{IEEEtran}
\bibliography{./bibliography.bib}


\begin{IEEEbiography}[{\includegraphics[width=1in,height=1.25in,clip,keepaspectratio]{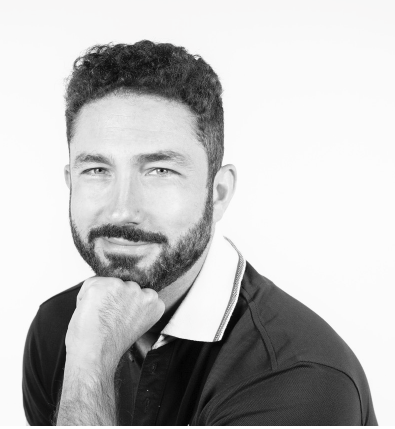}}]{Lorenzo Valerio}
is a Technologist at IIT-CNR. 
His current main research activity focuses on the design of communication efficient distributed learning solutions, machine learning solutions for resource-constrained devices and machine learning based cellular traffic offloading solutions for the Future Internet. 
He has published in journals and conference proceedings more than 30 papers. 
He has served as Workshop co-chair for IEEE AOC’15. He has been guest editor for the Elsevier Computer Communications and he is currently member of the editorial board for the same Journal. He  He has been recipient for 1 Best Paper Award at IEEE WoWMoM 2013 and one Best Paper Nomination at IEEE SMARTCOMP 2016.
\end{IEEEbiography}

\begin{IEEEbiography}[{\includegraphics[width=1in,height=1.25in,clip,keepaspectratio]{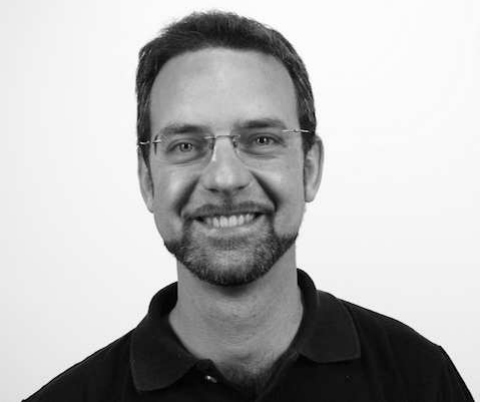}}]{Andrea Passarella}
is a Research Director at IIT-CNR and Head of the Ubiquitous Internet Group.
Before joining UI-IIT he was a Research Associate at the Computer
Laboratory of the University of Cambridge, UK.
He published 150+ papers in international journals and conferences,
receiving 4 best paper awards, including at IFIP Networking 2011
and IEEE WoWMoM 2013.
He was Chair/Co-Chair of several IEEE and ACM conferences/workshops
(including IFIP IoP 2016, ACM CHANTS 2014 and IEEE WoWMoM 2011 and 2019).
He is the founding Associate EiC of the Elsevier journal Online
Social Networks and Media (OSNEM), and Area Editor for the Elsevier
Pervasive and Mobile Computing Journal and Inderscience Intl.
Journal of Autonomous and Adaptive Communications Systems.
\end{IEEEbiography}

\begin{IEEEbiography}[{\includegraphics[width=1in,height=1.25in,clip,keepaspectratio]{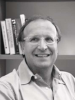}}]{Marco Conti}
is the Director of IIT-CNR.
He was the coordinator of FET-open project ``Mobile Metropolitan
Ad hoc Network (MobileMAN)'' (2002-2005), and he has been the CNR
Principal Investigator (PI) in several projects funded by the
European Commission: FP6 FET HAGGLE (2006-2009), FP6 NEST MEMORY
(2007-2010), FP7 EU-MESH (2008-2010), FP7 FET SOCIALNETS (2008-2011),
FP7 FIRE project SCAMPI (2010-2013), FP7 FIRE EINS (2011-15) and
CNR Co-PI for the FP7 FET project RECOGNITION (2010-2013).
%
%
He has published in journals and conference proceedings more than
300 research papers
to design, modelling, and performance evaluation of computer and
communications systems.
%
\end{IEEEbiography}

\onecolumn
\appendices

\section{Proof of Theorem \ref{th:convex}}
\label{app:proofthcvx}
Here we prove Theorem~\ref{th:convex}. To this end we analyse  Eq.~\ref{eq:cg} (reported here for convenience) and we define the conditions for which each of its terms is convex. 
\begin{eqnarray}
	C(\gamma)  &=& C_A(\gamma) + C_D(\gamma)+C_P(\gamma) \nonumber \\
    &= &  2 w \theta \left(\frac{m_0}{\gamma}-1\right)R(\gamma) +\nonumber\\
    & & +   \theta\left(m_0 - \frac{m_0}{\gamma}\right)n_0 d +\nonumber\\
    & & +   \beta f(\gamma) \tau(n_0m_0+n_0\gamma)R(\gamma)
\end{eqnarray}
First let us set the domain assumptions of the following constants: $w,m_0,n_0,d,\tau\in[1,\infty] \subseteq \mathbb{R}$, $\theta,\beta \in (0,\infty)\subseteq \mathbb{R}$ , $\varepsilon \in (0,1) \subseteq \mathbb{R}$. Moreover we recall that $\gamma \in [1,\infty) \subseteq \mathbb{R}, \alpha \in(0,\infty) \subseteq \mathbb{R}$ and we assume $k \leq \sqrt{N}$ as also discussed in \cite{zhang2013linear,Lee:2015aa,Lee:2017aa}.


\subsection*{Convexity analysis of $C_A(\gamma)$}
Let us consider the second derivative of $C_A(\gamma)$: 
\begin{equation}
    C_A''(\gamma) = \frac{4 \theta  \lgg w (-\gamma  \kappa+3 \kappa m_0+\gamma  m_0 n_0)}{\gamma ^4 n_0}
    \label{d2ca}
\end{equation}
According to the domain assumptions, all the terms of Eq. (\ref{d2ca}) are positive. Therefore, we just need to identify the condition on $\gamma$ for which holds the inequality
\begin{eqnarray}
-\gamma k + 3km_0 + \gamma m_0n_0 &\geq& 0 \nonumber\\
\gamma (m_0n_0 - k) &\geq& -3km_0 \nonumber \\
\gamma &\geq& \frac{-3km_0}{m_0n_0 - k}
\end{eqnarray}
for $k\neq m_0n_0$. Provided that we assume $\kappa \leq \sqrt{N} < m_0n_0$, the condition is verified. 
Therefore, since in the domain of interest the interval $[1,\infty) \subset [\frac{-3\kappa m_0}{m_0n_0-\kappa},\infty)$ we conclude that for $\gamma\in[1,\infty)$, $C_A(\gamma)$ is convex.
%
\subsection*{Convexity analysis of $C_D(\gamma)$}   
Let us consider the second derivative of $C_D(\gamma)$:
\begin{equation}
    C''_D(\gamma) = \frac{- 2 d m_0 n_0 \theta}{\gamma^3} 
    \label{d2cd}
\end{equation}
Considering that according to the domain assumptions all if its terms are positive, the second derivative $C_D''(\gamma)$ is always negative. Therefore, the term $C_D(\gamma)$ is \emph{concave}.
%
\subsection*{Convexity analysis of $C_P(\gamma)$}
Let us consider the second derivative:
\begin{equation}
    C''_P(\gamma) = \beta  \lgg \tau  \gamma ^{\alpha -3} \left(\alpha ^2 \gamma  \kappa-\alpha  \gamma  \kappa+\alpha ^2 \kappa m_0-3 \alpha  \kappa m_0+2 \kappa m_0+\alpha ^2 \gamma  m_0 n_0-\alpha  \gamma  m_0 n_0+\alpha ^2 \gamma ^2 n_0+\alpha  \gamma ^2 n_0\right)
    \label{d2cp}
\end{equation}
We consider the following limits: 
\begin{eqnarray}
    \lim_{\gamma \rightarrow \infty} C_P''(\gamma) &=& +\infty  \label{eq:cp_lim_inf}\\
    \lim_{\gamma \rightarrow 1^+} C_P''(\gamma) &=& \beta  \lgg \tau  \left(\alpha ^2 \kappa-\alpha  \kappa+\alpha ^2 \kappa m_0-3 \alpha  \kappa m_0+2 \kappa m_0+\alpha ^2 m_0 n_0-\alpha  m_0 n_0+\alpha ^2 n_0+\alpha  n_0\right)\label{eq:cp_lim_one}
\end{eqnarray}
Provided that, according to Eq. (\ref{eq:cp_lim_inf}) $C''_P(\gamma)$ is always positive for increasing values of $\gamma$, we concentrate on Eq. (\ref{eq:cp_lim_one}) where we study when it is positive. 
To this end, we analyse the following Equation (\ref{eq:cp_focus}) with respect to $\alpha$. 
\begin{equation}
P(\alpha) = \beta  \lgg \tau  (\alpha ^2 \kappa-\alpha  \kappa+\alpha ^2 \kappa m_0-3 \alpha  \kappa m_0+2 \kappa m_0+\alpha ^2 m_0 n_0-\alpha  m_0 n_0+\alpha ^2 n_0+\alpha  n_0)\geq0 \label{eq:cp_focus}
\end{equation}
Note that Eq. \ref{eq:cp_focus} is positive when $\alpha \leq \alpha_1$ and $\alpha \leq \alpha_2$, where $\alpha_1,\alpha_2$ are the two roots of Eq. \ref{eq:cp_focus} with the following expression: 
\begin{eqnarray}
\alpha_1 &\leq& \frac{+3 \kappa m_0+\kappa+m_0 n_0-n_0-\sqrt{(-3 \kappa m_0-\kappa-m_0 n_0+n_0)^2-8 \kappa m_0 (\kappa m_0+\kappa+m_0 n_0+n_0)}}{2 (\kappa m_0+\kappa+m_0 n_0+n_0)} \\
\alpha_2 &\geq& \frac{+3 \kappa m_0+\kappa+m_0 n_0-n_0+\sqrt{(-3 \kappa m_0-\kappa-m_0 n_0+n_0)^2-8 \kappa m_0 (\kappa m_0+\kappa+m_0 n_0+n_0)}}{2 (\kappa m_0+\kappa+m_0 n_0+n_0)}  
\end{eqnarray}
Now we are interested in understanding where $\alpha_1,\alpha_2$ are positioned on the Real line, i.e., with respect to the value $\alpha=1$. To this end, we consider the following limits. 
\begin{eqnarray}
\lim_{\alpha \rightarrow +\infty} P(\alpha) &=& +\infty \label{eq:lia1}\\
\lim_{\alpha \rightarrow 1^+} P(\alpha) &=& 2n_0 > 0\label{eq:lia2} \\
\lim_{\alpha \rightarrow 0^+} P(\alpha) &=& 2\kappa m_0 > 0\label{eq:lia3} \\
\lim_{\alpha \rightarrow -\infty} P(\alpha) &=& +\infty\label{eq:lia4} 
\end{eqnarray}
 which suggest that $0<\alpha_1<\alpha_2<1$. To confirm it we calculate the value $\alpha_0$ for which $P'(\alpha) = 0$.  
\begin{eqnarray}
P'(\alpha) = \beta  \lgg \tau  (2 \alpha  \kappa+2 \alpha  \kappa m_0-3 \kappa m_0-\kappa+2 \alpha  m_0 n_0-m_0 n_0+2 \alpha  n_0+n_0)
\end{eqnarray}
 The first derivative of $P(\alpha)$ is null for 
\begin{equation}
   \alpha_0 =  \frac{3 \kappa m_0+\kappa+m_0 n_0-n_0}{2 (m_0+1) (\kappa+n_0)}
\end{equation}
We observe that i) $\alpha_0 >0\ \forall k,n,m >1$ (which is compliant with the initial assumptions) and ii) $\alpha_0 < 1$ for $k < \frac{(m+3) n}{m-1}$. 

Therefore, we conclude that when $\alpha\leq\alpha_1 \vee\alpha\geq \alpha_2$, $C''_P(\gamma)\geq 0$ which implise that $C_P(\gamma)$ is convex. With respect to our model, this includes the linear, the super-linear and some of the sub-linear cases.

Finally, since the term $C_D(\gamma)$ is concave, we can conclude that: (i) $C_A(\gamma)+C_P(\gamma)$  is convex if $\alpha\leq\alpha_1 \vee\alpha\geq \alpha_2$ and (ii) when $C_D(\gamma)$ is negligible compared to $C_A(\gamma)+C_P(\gamma)$, Eq. (\ref{eq:cg}) is convex. This concludes the proof.\qed

\section{Proof of Theorem \ref{th:lin}}
\label{app:profthlin}
We want to prove that the Equation~\ref{eq:cg} (reported here for convenience) has only one minimum in the range $[1,m_0]$.
To this end we prove that, i) there is only one solution in $[1,m_0]$ and ii) that it is a minimum for the function $C(\gamma)$. Note we conducted this analysis by means of Sympy (http://www.sympy.org), a standard and open-source mathematical manipulation software. Therefore, our analysis is completely reproducible.  

Considering the equation
\begin{eqnarray}
	C(\gamma)  &=& C_A(\gamma) + C_D(\gamma)+C_P(\gamma) \nonumber \\
    &= &  2 w \theta \left(\frac{m_0}{\gamma}-1\right)R(\gamma) +\nonumber\\
    & & +   \theta\left(m_0 - \frac{m_0}{\gamma}\right)n_0 d +\nonumber\\
    & & +   \beta f(\gamma) \tau(n_0m_0+n_0\gamma)R(\gamma)
\end{eqnarray}

and its first derivative $C'(\gamma)$ in $\gamma$:

\begin{multline}
 C'(\gamma) = \\ \frac{1}{\gamma^{3} n_0}\Bigg(\beta \gamma^{3} \lgg n_0^{2} \tau \left(\gamma + m_0\right) + \beta \gamma^{3} \lgg n_0 \tau \left(\gamma n_0 + k\right) + d \gamma m_0 n_0^{2} \theta + 2 \lgg \omega \theta \left(k \left(\gamma - m_0\right) - m_0 \left(\gamma n_0 + k\right)\right)\Bigg)
\end{multline}

it is possible to find, after some algebraic manipulations, the four roots  of Eq. $C'(\gamma)=0$ w.r.t. $\gamma$, all in closed-form and denoted as $\gamma_1,\gamma_2,\gamma_3,\gamma_4$. Moreover,  $\gamma_1 \in \mathbb{R}^+$, $\gamma_2 \in \mathbb{R}^-$ and $\gamma_3,\gamma_4 \in \mathbb{C}$. Since in our problem $\gamma \in [1,m_0] \subseteq \mathbb{R}^+$, the only solution of interest is $\gamma_1$. The definition of $\gamma_1$ provided in Appendix \ref{app:defs}, while the rest of the roots are not reported for the sake of space and clarity, i.e., their form is quite complex to render (see Appendix \ref{app:defs}) therefore we provide only the one useful for the paper.

We conclude the first part of the proof stating that there exists one and only one solution in the domain of our problem. The proof that $\gamma_1$ is a minimum for $C(\gamma)$ follows directly from the proof Theorem \ref{th:convex} presented in Appendix \ref{app:profthlin}. \qed



\section{Analytical formula of $\gamma_1$ }
\label{app:defs}
The function $\gamma_1$ of Theorem~\ref{th:lin} is defined as:
\begin{equation}
\gamma_1 = A(m,n,d,k,\tau,\mu,\varepsilon,\theta,\beta,\omega)
\end{equation}
where $A(m,n,d,k,\tau,\mu,\varepsilon,\theta,\beta,\omega)$, is a function  defined as follows: 

\begin{landscape}
\tiny
\begin{multline}
  -\frac{1}{24 n_{0\omega}}\Bigg( n_0m_0(3 k+3 m_0 n_0+\sqrt{3} n_0 \Bigg(\frac{1}{n_0^2}[3 (k+m_0 n_0)^2+\frac{1}{\beta  \lgg \tau }\Bigg[8 \sqrt[3]{3} (-18 k
    \lgg m_0 \theta  \omega \Big(\beta  \lgg m_0 \tau  n_0^2+\beta  k \lgg \tau  n_0\Big)^2+9 \beta  \lgg n_0^2 \tau
     \theta ^2 \Big(d m_0 n_0^2+2 \lgg (k-m_0 n_0) \omega\Big)^2+\sqrt{3} \\\Big(\lgg^2 \theta ^2 \Big(\beta ^3 \lgg
    n_0^3 \tau ^3 \theta  \Big(d m_0 (k+m_0 n_0) n_0^2+2 \lgg \Big(k^2+16 m_0 n_0 k-m_0^2 n_0^2\Big) \omega\Big)^3+ 27 \Big(\beta  \tau
    \theta  \Big(d m_0 n_0^3+2 \lgg (k-m_0 n_0) \omega n_0\Big)^2-2 k m_0 \Big(\beta  \lgg m_0 \tau  n_0^2+\beta  k \lgg \tau
    n_0\Big)^2 \omega\Big)^2\Big)\Big)^{1/2})^{1/3}\Bigg]\\-\Big[8\ 3^{2/3} n_0 \theta  \Big(d m_0 (k+m_0 n_0) n_0^2+2 \lgg
    \Big(k^2+16 m_0 n_0 k-m_0^2 n_0^2\Big) \omega\Big)\Big]\Big[\Big(-18 k \lgg m_0 \theta  \omega \Big(\beta  \lgg m_0 \tau
    n_0^2+\beta  k \lgg \tau  n_0\Big)^2+9 \beta  \lgg n_0^2 \tau  \theta ^2 \Big(d m_0 n_0^2+2 \lgg (k-m_0 n_0)
    \omega\Big)^2+\sqrt{3} \\\sqrt{\lgg^2 \theta ^2 \Big(\beta ^3 \lgg n_0^3 \tau ^3 \theta  \Big(d m_0 (k+m_0 n_0) n_0^2+2
    \lgg \Big(k^2+16 m_0 n_0 k-m_0^2 n_0^2\Big) \omega\Big)^3+27 \Big(\beta  \tau  \theta  \Big(d m_0 n_0^3+2 \lgg (k-m_0 n_0) \omega
    n_0\Big)^2-2 k m_0 \Big(\beta  \lgg m_0 \tau  n_0^2+\beta  k \lgg \tau  n_0\Big)^2
    \omega\Big)^2\Big)}\Big)^{1/3}\Big]^{-1}]\Bigg)^{1/2} \\ -\sqrt{6} n_0 \Bigg\{\frac{1}{n_0^3}\Bigg[\Big(4\ 3^{2/3} \theta  \Big(d m_0 (k+m_0 n_0) n_0^2+2 \lgg
    \Big(k^2+16 m_0 n_0 k-m_0^2 n_0^2\Big) \omega\Big) n_0^2\Big) \Big(\Big(-18 k \lgg m_0 \theta  \omega \Big(\beta  \lgg m_0 \tau
    n_0^2+\beta  k \lgg \tau  n_0\Big)^2+9 \beta  \lgg n_0^2 \tau  \theta ^2 \Big(d m_0 n_0^2+2 \lgg (k-m_0 n_0)
    \omega\Big)^2+\sqrt{3} \\\sqrt{\lgg^2 \theta ^2 \Big(\beta ^3 \lgg n_0^3 \tau ^3 \theta  \Big(d m_0 (k+m_0 n_0) n_0^2+2
    \lgg \Big(k^2+16 m_0 n_0 k-m_0^2 n_0^2\Big) \omega\Big)^3+27 \Big(\beta  \tau  \theta  \Big(d m_0 n_0^3+2 \lgg (k-m_0 n_0) \omega
    n_0\Big)^2-2 k m_0 \Big(\beta  \lgg m_0 \tau  n_0^2+\beta  k \lgg \tau  n_0\Big)^2 \omega\Big)^2\Big)}\Big)^{1/3}\Big)^{-1}+ \\3 (k+m_0 n_0)^2
    n_0-\frac{1}{\beta  \lgg \tau
    }\Big[4 \sqrt[3]{3} \Big(-18 k \lgg m_0 \theta  \omega \Big(\beta  \lgg m_0 \tau  n_0^2+\beta  k \lgg \tau
    n_0\Big)^2+9 \beta  \lgg n_0^2 \tau  \theta ^2 \Big(d m_0 n_0^2+2 \lgg (k-m_0 n_0) \omega\Big)^2+\sqrt{3} \\
    \sqrt{\lgg^2 \theta ^2 \Big(\beta ^3 \lgg n_0^3 \tau ^3 \theta  \Big(d m_0 (k+m_0 n_0) n_0^2+2 \lgg \Big(k^2+16
    m_0 n_0 k-m_0^2 n_0^2\Big) \omega\Big)^3+27 \Big(\beta  \tau  \theta  \Big(d m_0 n_0^3+2 \lgg (k-m_0 n_0) \omega n_0\Big)^2-2 k m_0
    \Big(\beta  \lgg m_0 \tau  n_0^2+\beta  k \lgg \tau  n_0\Big)^2 \omega\Big)^2\Big)})^{1/3} n_0\Big]+\\
    \Big(3 \sqrt{3} \Big(\beta  \lgg \tau  (k+m_0 n_0)^3+32 n_0 \theta  \Big(d m_0 n_0^2+2 \lgg (k-m_0 n_0)
    \omega\Big)\Big)\Big)\Big(\beta  \lgg \tau   \Big(\frac{1}{n_0^2}\Bigg[3 (k+m_0 n_0)^2+\frac{1}{\beta  \lgg \tau }\\(8 \sqrt[3]{3} (-18 k \lgg m_0 \theta  \omega
    \Big(\beta  \lgg m_0 \tau  n_0^2+\beta  k \lgg \tau  n_0\Big)^2+9 \beta  \lgg n_0^2 \tau  \theta ^2 \Big(d m_0
    n_0^2+2 \lgg (k-m_0 n_0) \omega\Big)^2+\sqrt{3} \\\sqrt{\lgg^2 \theta ^2 \Big(\beta ^3 \lgg n_0^3 \tau ^3 \theta
    \Big(d m_0 (k+m_0 n_0) n_0^2+2 \lgg \Big(k^2+16 m_0 n_0 k-m_0^2 n_0^2\Big) \omega\Big)^3+27 \Big(\beta  \tau  \theta  \Big(d m_0
    n_0^3+2 \lgg (k-m_0 n_0) \omega n_0\Big)^2-2 k m_0 \Big(\beta  \lgg m_0 \tau  n_0^2+\beta  k \lgg \tau  n_0\Big)^2
    \omega\Big)^2\Big)})^{1/3})\\- (8\ 3^{2/3} n_0 \theta  \Big(d m_0 (k+m_0 n_0) n_0^2+2 \lgg \Big(k^2+16 m_0
    n_0 k-m_0^2 n_0^2\Big) \omega\Big))((-18 k \lgg m_0 \theta  \omega \Big(\beta  \lgg m_0 \tau  n_0^2+\beta  k \lgg
    \tau  n_0\Big)^2+9 \beta  \lgg n_0^2 \tau  \theta ^2 \Big(d m_0 n_0^2+2 \lgg (k-m_0 n_0) \omega\Big)^2+\sqrt{3}
    \\ (\lgg^2 \theta ^2 \Big(\beta ^3 \lgg n_0^3 \tau ^3 \theta  \Big(d m_0 (k+m_0 n_0) n_0^2+2 \lgg \Big(k^2+16
    m_0 n_0 k-m_0^2 n_0^2\Big) \omega\Big)^3+\\27 \Big(\beta  \tau  \theta  \Big(d m_0 n_0^3+2 \lgg (k-m_0 n_0) \omega n_0\Big)^2-2 k m_0
    \Big(\beta  \lgg m_0 \tau  n_0^2+\beta  k \lgg \tau  n_0\Big)^2 \omega\Big)^2\Big))^{1/2})^{1/3})^{-1}\Bigg]\Big)^{1/2}\Bigg\}^{-1}\Bigg]\Bigg)^{1/2}\Bigg)
\end{multline}
\end{landscape}
\end{document}